\documentclass{IEEEtran}
\usepackage[breaklinks]{hyperref}
\usepackage{color, dsfont, bm, epsfig, graphicx, cite, multirow, multicol, comment, mathtools}
\usepackage[ruled, noend]{algorithm2e}
\usepackage{amsmath,amsfonts,amsthm,amssymb,subfigure}
\usepackage{stfloats,booktabs,threeparttable,tabularx,placeins}
\usepackage{mathrsfs}

\newcommand{\st}[1]{\text{s.t.}}
\renewcommand{\arraystretch}{1.2}

\newtheorem{definition}{Definition} 
\newtheorem{lemma}{Lemma}

\newtheorem{remark}{Remark}
\usepackage{xcolor}
\usepackage{transparent} 
\usepackage{textcomp}
\usepackage{newtxtext}

\begin{document}

\title{Rethinking Channel Estimation for Fluid Antenna Systems: A Physics-Informed Operator Learning}

\author{
    Zixin Wang,~\IEEEmembership{Member,~IEEE},
    Lizhong Zheng,~\IEEEmembership{Fellow,~IEEE},
    Yuanming Shi,~\IEEEmembership{Senior Member,~IEEE},
    Kaibin Huang,~\IEEEmembership{Fellow,~IEEE},
    and Khaled B. Letaief,~\IEEEmembership{Fellow,~IEEE}
    \thanks{Z. Wang, and K. B. Letaief are with the Department of Electronic and Computer Engineering, The Hong Kong University of Science and Technology, Hong Kong (e-mail: \{eewangzx,eekhaled\}@ust.hk). 
    L. Zheng is with the Department of Electrical Engineering and Computer Science, Massachusetts Institute of Technology, Cambridge, MA 02139, USA (e-mail: lizhong@mit.edu).
    Y. Shi is with the School of Information Science and Technology, ShanghaiTech University, Shanghai, 201210, China (E-mail: shiym@shanghaitech.edu.cn).
    K. Huang is with the Department of Electrical and Computer Engineering, The University of Hong Kong, Hong Kong (e-mail: huangkb@hku.hk).
}
}

\maketitle
\begin{abstract}
Fluid antenna systems (FAS) leverage continuous reconfigurable apertures to transcend the spatial limits of fixed-position arrays, yet acquiring full-aperture channel state information (CSI) from sparse observations remains a formidable bottleneck. 
\textcolor{black}{In this paper, we establish a continuous-field formulation for FAS by recasting channel estimation as a stochastic inverse field-reconstruction problem constrained by the projected Helmholtz model.} Through electromagnetic information theory (EIT)-based spectral leakage analysis, we demonstrate that finite-aperture reconstruction induces an intrinsic convolutional relationship in the spectral domain, rendering the estimation task a fundamental frequency-domain inverse deconvolution problem. 
To solve this, we propose a U-shaped Fourier neural operator (UFNO) specifically tailored for noisy, underdetermined wireless environments. 
\textcolor{black}{Furthermore, we leverage the Landau--Pollak theorem to derive an aperture-dependent EDoF bound that guides the spatial mode budget, and employ a physics-aware Sobolev loss to promote consistency in both channel values and spatial variations.} 
\textcolor{black}{Unlike conventional learning-based approaches, a single frozen UFNO can be directly evaluated at unseen spatial--frequency resolutions without resolution-specific retraining or explicit output interpolation.} Extensive simulations show that our framework significantly outperforms state-of-the-art benchmarks in accuracy while achieving \textcolor{black}{orders-of-magnitude lower inference latency than iterative solvers at high spatial--frequency resolutions}.
\end{abstract}

\begin{IEEEkeywords}
Fluid Antenna Systems, channel estimation, Fourier neural operator, physics-informed deep learning, electromagnetic information theory.
\end{IEEEkeywords}

\section{Introduction}

\IEEEPARstart{M}{ultiple}-input multiple-output (MIMO) technology is fundamental to contemporary wireless networks, leveraging spatial degrees of freedom (DoFs) to enhance spectral efficiency and link reliability \cite{11433651, 10812728, pizzo2022fourier, 10925441}. 
However, conventional MIMO architectures are inherently limited by the Nyquist-based half-wavelength spacing rule, which restricts spatial diversity gains within a fixed physical aperture regardless of antenna density. 
To surpass these constraints, continuous-aperture MIMO (CAP-MIMO) has emerged as a transformative paradigm \cite{zhu2025electromagnetic, pizzo2022fourier}. By integrating metasurfaces or dense sub-wavelength arrays, CAP-MIMO enables ultra-fine electromagnetic (EM) control, effectively realizing a holographic radiating aperture \cite{zhu2025electromagnetic}. 
Nevertheless, translating the theoretical potential of CAP-MIMO into practice is hindered by several interrelated challenges. The extreme proximity of sub-wavelength elements induces severe mutual coupling and complex near-field interactions that complicate signal processing. 
Furthermore, acquiring full-space EM channel state information (CSI) necessitates high-dimensional channel matrices, where the resulting ``curse of dimensionality'' imposes prohibitive pilot and computational overheads. 
These difficulties are compounded by the excessive hardware costs and power consumption associated with scaling active elements, presenting substantial barriers to the large-scale commercialization of dense-array-centric architectures.

Against this backdrop, fluid antenna systems (FAS) introduce a transformative paradigm that sidesteps the limitations of static hardware densification \cite{9264694, 10753482, liao2025joint, 11193779}.
By enabling a limited number of movable antennas to dynamically traverse a continuous spatial manifold within a predefined aperture, FAS not only minimizes hardware overhead but also circumvents the analytical complexities and hardware redundancy inherent in metasurface-based CAP-MIMO.
The seminal work in \cite{10794752} first introduced the FAS concept, and demonstrated that even a small linear aperture (e.g., 0.5$\lambda$) with sub-wavelength port spacing can yield significant spatial diversity gains, even surpassing multi-antenna MRC systems.
Furthermore, the authors in \cite{9131873} revealed that FAS can achieve substantial ergodic capacity in both point-to-point and multi-user multiple-access scenarios.
However, FAS performance hinges on high-resolution CSI across the full spatial domain, demanding dense sub-wavelength measurements. Hardware constraints permit only a sparse subset, rendering channel estimation a severely underdetermined inverse problem.

Considering its sparse nature, a conventional approach is adopting compressed sensing (CS) technology\cite{10236898,10755170,10130117}.
By sending pilots from the random antenna ports, the channel estimation problem can be formulated as a sparse signal recovery problem, which can be solved by algorithms such as orthogonal matching pursuit (OMP) or sparse Bayesian learning (SBL). 
For instance, the authors in \cite{10236898} proposed a successive transmitter-receiver compressed sensing (STRCS) method by exploiting the angular-domain sparsity of multi-path components to reconstruct the continuous field-response information from finite measurements.
The authors in \cite{10755170} developed an SBL-based channel estimator by reformulating the estimation task into a basis selection problem, thereby achieving superior reconstruction accuracy with limited radio-frequency chains.
However, these methods are fundamentally limited by the discretization of the channel into a finite grid, leading to grid mismatch artifacts and high computational complexity in high-dimensional scenarios.
Besides, these methods highly rely on the sparsity assumption based on the shallow scattering environment, leading to limited implementation cases.

Conversely, learning-based approaches leverage data-driven architectures to characterize the nonlinear mapping from sparse pilot observations to full-space channel realizations, thereby shifting the burden of computationally intensive iterations to low-latency feed-forward inference. 
Beyond conventional neural network-driven channel extrapolation, which seeks to learn direct mappings from sparse inputs to high-dimensional CSI, the reconstruction of the channel field via generative modeling has emerged as a pivotal research frontier. 
Frameworks such as diffusion models \cite{tang2025accurate}, conditional generative adversarial networks (cGANs)\cite{10663271}, and variational autoencoders (VAEs)\cite{10495003} reformulate channel estimation as an inverse problem regularized by learned implicit priors. These methods have demonstrated a superior capacity for capturing intricate electromagnetic (EM) characteristics and spatial correlations at sub-wavelength scales.
Despite their reconstructive fidelity, diffusion-based methods necessitate multi-step iterative sampling, which incurs substantial computational latency. 
This delay poses a significant challenge in satisfying the stringent constraints of the channel coherence interval, particularly in high-mobility scenarios. 
Furthermore, while techniques based on 3D Gaussian Splatting (3DGS), such as RF-PGS\cite{zhang2025rf}, RF-3DGS\cite{zhang2024rf3dgs}, facilitate rapid rendering and geometric alignment, they are fundamentally limited to site-specific optimization. These approaches lack the generalization capability required for inference in unseen environments and often neglect the underlying physical operator constraints governing electromagnetic wave propagation.

Grounded in electromagnetic information theory (EIT) \cite{zhu2025electromagnetic, li2025spatio, 10751774, castellanos2024electromagnetic}, wireless channels are fundamentally modeled as continuous physical fields governed by Maxwell's equations and the associated Helmholtz equation, rather than as collections of isolated random variables. 
Although these fields exhibit an inherent low-rank structure in the wavenumber domain, existing solution methodologies encounter significant technical barriers. 
Specifically, numerical solvers, e.g., ray tracing, the finite element method, and the finite-difference method, are often computationally prohibitive. 
Moreover, they are constrained by a prohibitive trade-off between computational accuracy and efficiency, making them ill-suited for the millisecond-level real-time reconstruction required by 6G networks.
Alternatively, statistical kernel-based approaches\cite{li2025spatio}, including Gaussian process regression and kernel learning, have introduced physically inspired kernels, such as the electromagnetic correlation function. 
However, these methods essentially rely on fixed-kernel structures, which lead to high computational complexity for matrix inversion and struggle to adapt to complex, dynamic EM environments.
\textcolor{black}{Furthermore, most existing techniques lack resolution invariance, limiting direct channel reconstruction at spatial--frequency discretizations unseen during training.}

Inspired by the paradigm shift toward neural operator learning for partial differential equations (PDEs) \cite{li2020fourier}, we propose a physics-informed learning framework tailored for FAS channel estimation. \textcolor{black}{Unlike deterministic and noise-free PDE benchmarks where Fourier Neural Operators (FNOs) excel, FAS estimation presents an ill-posed inverse problem with an unknown stochastic excitation and severely undersampled observations.}
Standard FNOs can not be directly implemented for this problem due to their rigid spectral truncation, which lacks adaptability to varying noise levels, and a pervasive low-pass bias that suppresses critical sub-wavelength spatial features. 
{\color{black}To address these challenges, we reformulate FAS channel estimation as a continuous operator learning task and introduce a U-shaped FNO (UFNO) whose non-local architecture is motivated by the projected Helmholtz model. The primary contributions of this work are summarized as follows.
\begin{enumerate}
\item [$\bullet$]We recast FAS channel estimation as a stochastic inverse field-reconstruction problem constrained by the projected Helmholtz model. 
By shifting from discrete vector recovery to continuous field reconstruction, we utilize EIT to reveal that finite-aperture sensing induces an intrinsic spectral convolution. This analysis formalizes the channel estimation task as an inverse deconvolution problem within the spatio-frequency domain.
\item [$\bullet$]Beyond empirical modeling, we derive a closed-form upper bound on the effective degrees of freedom (EDoF) of the FAS channel via the Landau--Pollak theorem. This result characterizes the aperture-dependent spatial dimensionality and provides a physics-based reference for selecting the UFNO spatial mode budget. We further distinguish this ideal dimension count from SNR-dependent stable recovery through a finite-aperture singular-value analysis and multi-aperture simulations.
\item [$\bullet$]We develop a UFNO based channel estimation framework to overcome the limitations of conventional operators through two key mechanisms. First, a learnable soft-thresholding operator is introduced to dynamically suppress noise while preserving physical signal modes across varying signal-to-noise ratios (SNRs). 
Second, a dual-branch spectral-spatial structure is employed to mitigate low-pass bias, enabling the simultaneous capture of macroscopic low-frequency structures and high-frequency sub-wavelength details.
\item [$\bullet$]We propose a physics-aware Sobolev loss that penalizes discrepancies in both channel values and spatial--frequency variations. 
This regularization enhances the recovery of high-gradient scattering structures. Numerical results demonstrate that our framework achieves grid-agnostic reconstruction with orders-of-magnitude lower latency than model-based baselines.
\end{enumerate}
The rest of the paper is organized as follows. 
Section II introduces the system model and FAS channel modeling. 
Section III details the proposed UFNO framework, including its architecture and loss function design. Section IV presents simulation results comparing UFNO with benchmark methods. Finally, Section V concludes the paper.}

\section{System Model and Modeling of FAS Channels}
\textcolor{black}{This section establishes a physics-informed model in which the continuous aperture field is characterized by a projected stochastic Helmholtz relation.}
By analyzing the resulting spatio-frequency structure, we establish the motivation for operator learning.

\subsection{Channel Propagation as a Physical Field}
\begin{figure*}[t]
\centering
\includegraphics[width = \linewidth]{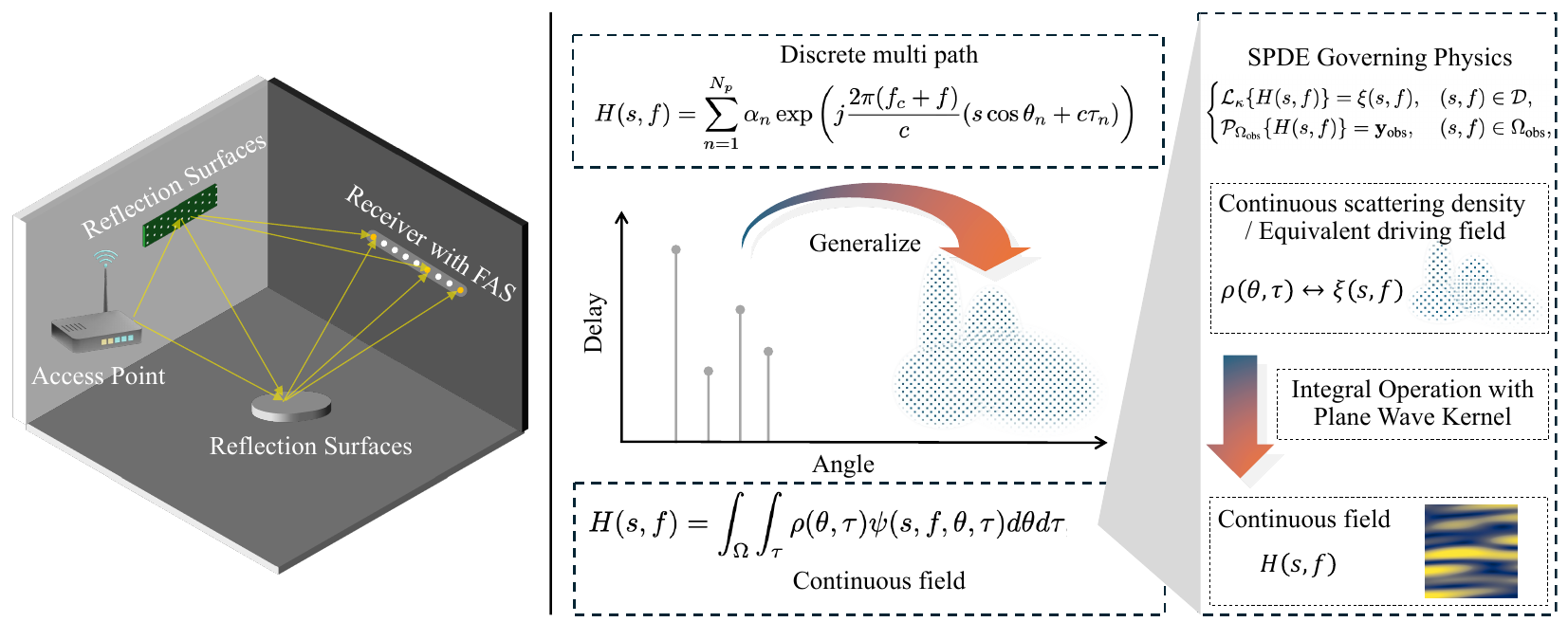}
\caption{The continuous-field paradigm for FAS channel modeling.
The left panel depicts the physical multipath environment with a fluid antenna receiver sampling the spatio-frequency manifold. 
The middle panel illustrates the generalization of the channel model from a discrete multipath summation to a continuous-field integral, representing the scattering density in the delay-angle domain. 
The right panel highlights the governing physics, where the channel field $H(s, f)$ is characterized as the solution to a SPDE driven by the equivalent scattering source $\xi(s, f)$.}
\label{fig::system}
\end{figure*}
Following the FAS framework in \cite{dong2025frequency}, we consider a wideband FAS architecture where a single-antenna transmitter serves a receiver (Rx) equipped with a one-dimensional continuous linear aperture $\mathcal{S} \subset \mathbb{R}$ of length $L$, as shown in Fig. \ref{fig::system}. 
The system operates over an orthogonal frequency division multiplexing (OFDM) framework across a bandwidth $B$, partitioned into $N_f$ subcarriers. The frequency domain is denoted as $\mathcal{B} \triangleq [-B/2, B/2]$. 
The spatial aperture is normalized as $L = W\lambda$, where $\lambda$ denotes the carrier wavelength and $W$ is the aperture dimension. In view of hardware efficiency, the Rx is constrained by a limited number of $M$ radio-frequency (RF) chains, which implies that only a sparse subset of the spatio-frequency manifold can be observed within any given coherence interval\footnote{In practice, block accumulation across multiple coherence intervals can be employed to increase the effective number of observations beyond $M$. Nevertheless, limited by available spectral resources, the effective observations remain sparse compared to $N_s \times N_f$.}.

Recall that the wireless channels exhibit inherent variability across both spatial and frequency domains due to multipath propagation, characterized by angular and delay spreads, a phenomenon termed spatio-frequency selectivity. 
Consequently, antenna displacement within the aperture $\mathcal{S}$ induces significant fluctuations in channel realizations. 
In this paper, we utilize the spatio-frequency characteristic (SFC) to jointly represent this diversity. 
The SFC models all potential propagation modes between regions over a specified bandwidth, thereby establishing a mapping from the input spatio-frequency-dependent pilot signals to the resulting channel responses.
\begin{definition}[SFC Field]
The SFC of the FAS-aided link are defined as a continuous complex-valued field $H(s, f)$ over the joint spatio-frequency domain $\mathcal{D} \triangleq [0, L] \times [-B/2, B/2]$. 
This field characterizes the input-output relationship of the propagation environment, such that for a source signal $x(f)$, the received signal $y(s, f)$ is given by
\begin{equation}
y(s, f) = \int_{\mathcal{B}} H(s, f') x(f') \delta(f - f') df' = H(s, f)x(f).
\end{equation}
\end{definition}
Conventionally, we can assume that there are $N_p$ discrete transmission paths from the fixed Tx to the Rx, then the aggregate SFC can be expressed as the superposition of plane waves:
\begin{equation*}
H(s, f) = \sum_{n=1}^{N_p} \alpha_n \exp\!\left( { j\, 2\pi(f_c+f)\!\left(\frac{s\cos\theta_n}{c}+\tau_n\right)} \right),
\end{equation*}
where $f_c$ and $c$ represent the carrier frequency and the velocity of light, respectively. 
The triplet $\{\alpha_n, \theta_n, \tau_n\}$ denotes the complex gain, angle of arrival (AoA), and propagation delay associated with the $n$-th path. 
In particular, the term $2\pi(f_c+f)\!\left(\dfrac{s\cos\theta_n}{c}+\tau_n\right)$ encapsulates the spatio-frequency dispersion. By expressing the spatial displacement as an equivalent propagation delay $s\cos\theta_n/c$ and the propagation delay as $\tau_n$, both terms share the same temporal unit. 
This pairing of the instantaneous angular frequency $2\pi(f_c+f)$ with the aggregate delay $(s\cos\theta_n/c + \tau_n)$ gives rise to a non-separable harmonic structure in $H(s,f)$. 
Consequently, the spectral selectivity induced by the delay spread $\tau_n$ is intrinsically coupled with the spatial variations across the aperture $L$. 
It is worth noting that for simplicity of analysis, we ignore the frequency-dependent waveform broadening and distortion\cite{4802185, 11185051} of each path component under the ultra-wideband scenarios as in \cite{11185051}.  

Despite the prevalence of discrete path models, their reliance on the isolated point-scatterer assumption is insufficient for characterizing near-field interactions or continuously distributed scattering surfaces, such as walls and large-scale structures. 
By approximating multipath components as idealized Dirac impulses in both the angular and delay domains, these models overlook energy diffusion and diffuse scattering arising from surface roughness. 
For FAS, which utilize sub-wavelength spatial positioning, this diffusion is a critical determinant of the physical width and resolvability of spatial gain profiles. 
Furthermore, the practical difficulty in accurately estimating the parameter triplet $\{\alpha_n, \theta_n, \tau_n\}$ for each path renders the high-fidelity reconstruction of $H(s,f)$ from sparse RF observations a fundamentally ill-posed inverse problem.

\subsection{EIT-based SPDE Formulation}
Building on the continuous-field perspective, we now establish the governing physics of the FAS channel by leveraging EIT, which treats the SFC $H(s,f)$ is treated not merely as a heuristic stochastic process, but as a physical field strictly governed by the laws of electrodynamics.
Specifically, the SFC is characterized as a continuous integral of the scattering density function, representing the aggregate EM response across the spatial and frequency manifolds, expressed as:
\begin{align}
H(s, f) = \int_{\Omega} \int_{\tau} \rho(\theta, \tau) \psi(s, f, \theta, \tau) d\theta d\tau,
\end{align}
where $\rho(\theta, \tau)$ denotes the scattering density function, representing the joint distribution of electromagnetic intensity across the angular domain $\Omega$ and delay domain $\tau$.
The kernel $\psi(s, f, \theta, \tau) = \exp\!\left(j\,2\pi(f_c+f)\!\left(\frac{s\cos\theta}{c}+\tau\right)\right)$ represents the phase kernel of a plane wave propagating with angle $\theta$ and delay $\tau$, which fully characterizes the plane-wave eigenfunctions of the Helmholtz equation.
Consequently, conventional discrete path models can be interpreted as sampled realizations of this continuous integral under finite DoF. 

Within a source-free 2D local propagation plane spanned by the array axis $s$ and its transverse axis $z$, 
the physical electromagnetic field rigorously satisfies the Helmholtz equation \cite{pizzo2022spatial,zhu2025electromagnetic}, i.e.,
\begin{equation}
    (\nabla_s^2 + \nabla_z^2 + \kappa^2(f))H(s, z, f) = 0,
\end{equation}
where $\kappa(f)=2\pi(f_c+f)/c$ is the wavenumber.
By evaluating this field on the 1D FAS continuous aperture at $z=0$, the spatial evolution of the aperture field is governed by the projected equation:
\begin{equation} \label{eq:helmholtz}
(\nabla_s^2 + \kappa^2(f))H(s, f) = -\left.\frac{\partial^2 H(s, z, f)}{\partial z^2}\right|_{z=0},
\end{equation}
where $\nabla_s^2$ denotes the 1D Laplacian operator (i.e., $\partial^2/\partial s^2$).
\textcolor{black}{Accordingly, we characterize channel estimation as a stochastic inverse field-reconstruction problem.}
Let $\mathcal{D} = \mathcal{S} \times \mathcal{B}$ denote the joint spatio-frequency domain. \textcolor{black}{The aperture field and its sparse observations satisfy the following coupled physical and measurement relations:}
\begin{equation} \label{eq:spde_system}
\begin{cases}
    \mathcal{L}_{\kappa} \{ H(s, f) \} = \xi(s, f), & (s, f) \in \mathcal{D}, \\
    \mathcal{P}_{\Omega_{\text{obs}}} \{ H(s, f) \} = \mathbf{y}_{\text{obs}}, & (s, f) \in \Omega_{\text{obs}},
\end{cases}
\end{equation}
where $\mathcal{L}_{\kappa} = \nabla_s^2 + \kappa^2(f)$ is the 1D Helmholtz differential operator, and
$\xi(s, f) \triangleq -\left.\frac{\partial^2 H(s, z, f)}{\partial z^2}\right|_{z=0}$
captures the transverse wave dynamics injected from the multipath environment.
$\mathcal{P}_{\Omega_{\text{obs}}}$ is the projection operator mapping the continuous field to the sparse set of discrete RF observations $\mathbf{y}_{\text{obs}}$ at coordinates $\Omega_{\text{obs}}$.
Since multipath components arrive from random angles with random complex gains, the transverse derivative term naturally behaves as a complex-valued stochastic process.
Accordingly, we model $\xi(s, f)$ as a stationary complex Gaussian random field with zero mean and covariance
\begin{equation}
\mathbb{E}[\xi(s, f) \xi^*(s', f')] = \sigma^2 R_{\xi}(s - s', f - f'),
\end{equation}
where $R_{\xi}(\cdot,\cdot)$ is induced by the angular-delay power spectrum of the scattering environment.

\subsection{From Physical Principles to Operator Learning}
To characterize the intrinsic structure of the reconstruction problem posed in \eqref{eq:spde_system}, we analyze it from two complementary perspectives: the spectral domain and the spatial domain.
Regarding the spectral domain, under the source-free condition (i.e., the homogeneous counterpart of \eqref{eq:helmholtz}), the field exhibits a strictly band-limited structure in classical wave theory \cite{10751774}, with spectral energy mathematically confined to a bounded support.
Theoretically, this property guarantees perfect field reconstruction via Nyquist-compliant sampling.
However, while the global EM field is band-limited, an FAS receiver only observes this field over a finite, continuous aperture $\mathcal{S} = [0, L]$.
Mathematically, this finite-aperture observation process is formulated as:
\begin{equation} \label{eq:spatial_window}
H_{\text{obs}}(s, f) = H_{\infty}(s, f) \cdot \text{rect}\left( \frac{s - L/2}{L} \right),
\end{equation}
where $H_{\infty}(s, f)$ denotes the ideal field of infinite extent, and $\text{rect}(\cdot)$ is the rectangular window function. 
In the wavenumber domain, this spatial truncation translates into a convolution:
\begin{equation} \label{eq:spectral_leakage}
\hat{H}_{\text{obs}}(k_s, f) = \hat{H}_{\infty}(k_s, f) \circledast \underbrace{L \text{sinc}(k_s L) e^{-j \pi k_s L}}_{\mathcal{W}(k_s)},
\end{equation}
where $\hat{H}_{\infty}(k_s, f)$ is the ideal band-limited spectrum supported on $[-\kappa, \kappa]$, $\mathcal{W}(k_s)$ is the Fourier transform of the spatial window, and $\circledast$ denotes convolution in the wavenumber domain.
Governed by the space-wavenumber uncertainty principle \cite{franceschetti2017wave}, this spatial truncation inevitably induces spectral broadening. Specifically, the convolution inherently produces non-zero spectral components that spill beyond the theoretical cutoff wavenumber $\kappa(f)$, causing the observed field to deviate from the ideal band-limited assumption. 
Equivalently, for each frequency offset $f$, this convolution relationship reveals a fundamental insight: reconstructing the latent clean spectrum $\hat{H}_\infty(k_s, f)$ from finite-aperture observations is intrinsically an \textbf{inverse deconvolution problem in the wavenumber domain}.

From a \textit{spatial-domain perspective}, the dense port deployment inherent to FAS induces significant spatial correlation, leading to a pronounced rank deficiency in the channel covariance relative to the total number of sampling points $N$. This deficiency is rigorously governed by the Helmholtz equation \eqref{eq:helmholtz}, which constrains the spatial phase progression of an impinging plane wave to $k_s = \frac{1}{\lambda} \cos \theta$. Consequently, the spatial frequencies of multipath components are strictly confined within the interval $k_s \in [-\frac{1}{\lambda}, \frac{1}{\lambda}]$, defining a two-sided spatial bandwidth of $B_s = \frac{2}{\lambda}$. According to the Landau-Pollak theorem \cite{pizzo2022fourier}, a spatial field band-limited to $B_s$ and truncated to a finite continuous aperture $L$ possesses a finite effective dimensionality. This EDoF establishes a physical upper bound on the statistical rank $r$ of the continuous channel field \cite{11180048}:
\begin{equation}
N_{\text{DoF}} = \lceil B_s L \rceil + 1 = \left\lceil \frac{2 L}{\lambda} \right\rceil + 1 = \lceil 2W \rceil + 1,
\label{eq:edof}
\end{equation}
where $W = L/\lambda$ denotes the normalized aperture size. 
Equation \eqref{eq:edof} provides a closed-form physical upper bound on the intrinsic spatial dimensionality of the continuous-aperture field. The following lemma is an immediate consequence of this EDoF characterization together with a dimension-counting argument for linear recovery.
\begin{lemma}\label{lm1}
Perfect reconstruction of the FAS channel field using linear estimators (e.g., linear minimum mean square error (LMMSE)) requires the number of spatial observations $M$ to satisfy the spatial Nyquist-Shannon sampling criterion dictated by the EDoF:
\begin{equation}
M \ge N_{\text{DoF}} = \lceil 2W \rceil + 1.
\label{eq:sample_complexity}
\end{equation}
\end{lemma}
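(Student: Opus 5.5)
The argument is a dimension count: a linear estimator cannot recover more degrees of freedom than it has independent observations. I will treat the EDoF in \eqref{eq:edof} as the rank $r=N_{\text{DoF}}$ of the field's covariance. Concretely, I model the aperture field (restricted to a fixed subcarrier $f$) as lying, up to negligible energy, in an $r$-dimensional subspace $\mathcal{V}\subset L^2([0,L])$. This subspace is spanned by the dominant prolate spheroidal eigenfunctions $\{\phi_i\}_{i=1}^{r}$ given by the Landau--Pollak theorem. I then write $H(s,f)=\sum_{i=1}^{r} c_i(f)\phi_i(s)$, where the coefficient vector $\mathbf{c}(f)$ has a full-rank covariance $\mathbf{R}_c\succ 0$ on $\mathbb{C}^r$. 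Under this model, the $M$ spatial observations at ports $s_1,\dots,s_M$ take the form $\mathbf{y}=\bm{\Phi}\mathbf{c}+\mathbf{n}$, with $\bm{\Phi}\in\mathbb{C}^{M\times r}$ and $[\bm{\Phi}]_{mi}=\phi_i(s_m)$.

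The main step shows that any linear estimator $\hat H(s,f)=\mathbf{g}(s)^{\mathsf H}\mathbf{y}$, including LMMSE, reconstructs the field perfectly only if $\operatorname{rank}(\bm{\Phi})=r$. In the noiseless limit, perfect reconstruction over all of $\mathcal{S}$ amounts to recovering $\mathbf{c}$ exactly. Suppose $\operatorname{rank}(\bm{\Phi})<r$. Then some nonzero $\mathbf{v}\in\ker\bm{\Phi}$ exists, and the two fields with coefficients $\mathbf{c}$ and $\mathbf{c}+\mathbf{v}$ produce identical observations. Because $\mathbf{R}_c\succ0$, the direction $\mathbf{v}$ carries positive variance. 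Hence any estimator, linear or not, leaves an error of at least $\mathbf{v}^{\mathsf H}\mathbf{R}_c\mathbf{v}/\|\mathbf{v}\|^2>0$ in that direction. For LMMSE specifically, I would write the error covariance as $\mathbf{R}_c-\mathbf{R}_c\bm{\Phi}^{\mathsf H}(\bm{\Phi}\mathbf{R}_c\bm{\Phi}^{\mathsf H}+\sigma_n^2\mathbf{I})^{-1}\bm{\Phi}\mathbf{R}_c$. As $\sigma_n\to0$, this tends to $\mathbf{R}_c^{1/2}(\mathbf{I}-\mathbf{P})\mathbf{R}_c^{1/2}$, where $\mathbf{P}$ is the orthogonal projector onto the range of $\mathbf{R}_c^{1/2}\bm{\Phi}^{\mathsf H}$. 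This limit vanishes iff that range has dimension $r$. Finally, $\operatorname{rank}(\bm{\Phi})\le\min(M,r)$, so $M\ge r=\lceil 2W\rceil+1$ is necessary.

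I would close by noting that this is the spatial Nyquist--Shannon count. With spatial bandwidth $B_s=2/\lambda$, uniform sampling at spacing $\lambda/2$ across the aperture $L$ uses $\lceil 2W\rceil+1$ samples, which matches \eqref{eq:sample_complexity}.

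The main obstacle is rigor in the first step. The Landau--Pollak theorem gives only an \emph{effective} dimension: the eigenvalues beyond $\approx 2W$ decay super-exponentially but are not zero. So the claim holds exactly only under the idealization that the field lives in an $r$-dimensional space, which is how \eqref{eq:edof} is used as a rank bound. I would state that idealization explicitly as the model assumption and interpret ``perfect'' reconstruction as noiseless, modulo the negligible tail energy. Under that reading, the necessity argument above is exact.
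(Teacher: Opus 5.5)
Your proposal is correct and follows essentially the paper's route: at fixed $f$ you restrict to the $N_{\text{DoF}}$-dimensional effective subspace, require the linear sampling map to be injective there, and combine $\operatorname{rank}(\bm{\Phi})\le\min(M,N_{\text{DoF}})$ with the nontrivial-kernel (aliasing) argument for $M<N_{\text{DoF}}$; your noiseless LMMSE limit $\mathbf{R}_c^{1/2}(\mathbf{I}-\mathbf{P})\mathbf{R}_c^{1/2}$ is a correct extra check that the paper does not spell out. One small slip: $\mathbf{v}^{\mathsf H}\mathbf{R}_c\mathbf{v}/\|\mathbf{v}\|^2$ is the prior variance along $\mathbf{v}$, not a lower bound on the error after observing $\mathbf{y}$ (correlation with the observed directions can reduce it, and for non-Gaussian priors a nonlinear estimator can remove it entirely), but the lemma only needs strict positivity for linear estimators, which your limiting error covariance, of rank $r-\operatorname{rank}(\bm{\Phi})>0$, already provides.
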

\begin{proof}
Fix the frequency $f$, and let $\mathcal{V}_f$ denote the $N_{\text{DoF}}$-dimensional effective subspace of admissible spatial fields, where $\dim(\mathcal{V}_f) = \lceil 2W \rceil + 1$ follows from \eqref{eq:edof}. The spatial sampling process is modeled as a linear map $\mathcal{P} : \mathcal{V}_f \to \mathbb{C}^M$. 
Perfect linear reconstruction within this dominant subspace requires $\mathcal{P}$ to be injective, ensuring that distinct fields in $\mathcal{V}_f$ map to unique observation vectors. 
Mathematically, injectivity requires $\operatorname{rank}(\mathcal{P}) = \dim(\mathcal{V}_f) = N_{\text{DoF}}$. Given that the matrix rank is bounded by $\operatorname{rank}(\mathcal{P}) \le \min(M, N_{\text{DoF}})$, the condition $M \ge N_{\text{DoF}}$ is necessary. 
Conversely, if $M < N_{\text{DoF}}$, the rank-nullity theorem dictates that $\dim(\ker(\mathcal{P})) \ge N_{\text{DoF}} - M > 0$. This non-trivial null space induces spatial aliasing, meaning distinct spatial fields yield identical discrete samples, thereby rendering unambiguous reconstruction impossible.
\end{proof}

{\color{black}
This result establishes $N_{\text{DoF}}$ as an aperture-dependent upper bound on the effective spatial dimension of the ideal continuous-aperture field.
When $M<N_{\text{DoF}}$, the observation map has a non-trivial null space on the effective subspace, so measurements alone do not uniquely determine every admissible field.
The proposed operator-learning formulation addresses this underdetermined setting by learning channel-distribution priors together with the non-local mapping suggested by the wave equation; it does not remove the information limitation in Lemma~\ref{lm1}.
Specifically, solving the SPDE in \eqref{eq:spde_system} is equivalent to applying the inverse operator $\mathcal{L}_{\kappa}^{-1}$, which can be mathematically characterized by
\begin{equation}\label{greenfunc}
H(s, f) = \int_{\mathcal{S}} G(s, s'; f) \xi(s', f) d s',
\end{equation}
where $G(s, s'; f)$ represents the Green's-function kernel of the integral transformation.

Crucially, because the spatial aperture $\mathcal{S}$ is strictly bounded, the Green's function is square-integrable over $\mathcal{D}$. 
Accordingly, the integral operator constitutes a compact operator on the Hilbert space. 
Its decaying singular values explain why inversion from noisy finite-aperture observations is ill-conditioned even when the nominal dimension requirement is met.
The duality between the spatial differential operator $\mathcal{L}_{\kappa}$ and its inverse $\mathcal{L}_{\kappa}^{-1}$ therefore motivates a critical shift in perspective: \textit{the FAS reconstruction task can be naturally reframed as an operator learning problem over continuous channel fields rather than a resolution-specific vector recovery problem.}
The resulting neural operator maps sparse measurements to the channel distribution represented in the training data while retaining a non-local integral-operator structure motivated by the Helmholtz model.
}

{\color{black}
\begin{remark}
The projected Helmholtz relation motivates the operator formulation above rather than serving as a pointwise residual constraint during training.
Because $\xi(s,f)$ is an unknown stochastic excitation, imposing the homogeneous residual $\|\mathcal{L}_{\kappa}\widehat H\|_2^2$ would incorrectly force this excitation to zero; moreover, the sparse pilot samples are measurement constraints within $\mathcal{D}$ rather than prescribed Dirichlet or Neumann boundary values on $\partial\mathcal{D}$.
Thus, the proposed UFNO incorporates physics through the continuous-field model and the associated operator structure, without explicit PDE-residual enforcement.
\end{remark}
}

{\color{black}
\begin{remark}\label{rem:edof_interpretation}
The quantity $N_{\text{DoF}}=\lceil 2W\rceil+1$ describes the aperture-dependent dimension of the ideal band-limited spatial field and is invariant to the number of candidate ports used to discretize the aperture.
It upper-bounds the statistical rank $r={\rm rank}(\mathbf{C})$ considered in \cite{11180048}, but the necessary condition $M\geq N_{\text{DoF}}$ does not by itself guarantee stable inversion.
Indeed, for a singular component of the finite-aperture observation operator, $z_i=\sigma_i a_i+n_i$, direct inversion gives $\widehat a_i=a_i+n_i/\sigma_i$; hence, the noise power is amplified by $1/\sigma_i^2$, and an input modal SNR $\rho_i$ becomes $\rho_i\sigma_i^2$ after inversion.
Stable recovery therefore depends jointly on aperture, sampling, operator conditioning, and SNR.
Moreover, the physical singular functions of the finite-aperture operator are not identical to the discrete Fourier coefficients retained by UFNO, so $N_{\text{DoF}}$ is used as a reference for choosing the spatial mode budget rather than as an exact optimal value or a hard performance-saturation point.
\end{remark}
}
\section{Proposed Operator Learning Framework for Continuous SFC Reconstruction}
\label{sec:proposed_method}

In this section, we present the proposed UFNO-based channel estimation framework. We first formulate the channel estimation task as an operator learning problem between infinite-dimensional function spaces to address the continuous aperture nature of FAS. Subsequently, we detail the network architecture design to parameterize this operator efficiently in the spectral domain.

\subsection{Operator Learning Formulation}\label{subsec:problem_formulation}
We propose a resolution-invariant operator framework to map sparse pilot observations to continuous SFC fields, employing the FNO as the underlying architecture. 
To formalize the problem, let $\mathcal{D}$ denote the bounded continuous spatio-frequency domain. We define the input space $\mathcal{A} = \mathcal{A}(\mathcal{D}; \mathbb{R}^{d_a})$ consisting of sparse, noisy observation functions, and the output space $\mathcal{U} = \mathcal{U}(\mathcal{D}; \mathbb{R}^{d_u})$ comprising the corresponding ground-truth continuous channel fields, both are modeled as separable Banach spaces.
Our objective is to optimize a continuous-field operator $\mathcal{G}_{\boldsymbol{\Theta}}: \mathcal{A} \to \mathcal{U}$ that approximates the inverse mapping from the observation space to the SPDE solution manifold, where the optimal parameters $\boldsymbol{\Theta}^{*}$ are determined by minimizing the following expectation
\begin{equation} \label{eq:optimization_objective}
\begin{aligned}
    \arg\min_{\boldsymbol{\Theta}}~ &\mathbb{E}_{H \sim \mu,\, \mathbf{n} \sim \mathcal{N}(0, \sigma^2 \mathbf{I})} \left[ \| \mathcal{G}_{\boldsymbol{\Theta}}(\mathcal{P}(H) + \mathbf{n}) - H \|_{\mathcal{U}}^2 \right] \\
    &\quad + \alpha \mathcal{R}(\boldsymbol{\Theta}),
\end{aligned}
\end{equation}
where $\mu$ denotes the probability measure of the random channel field $H(s,f)$, $\|\cdot\|_{\mathcal{U}}$ denotes the $L^2$-norm on the functional space, and $\mathcal{R}(\boldsymbol{\Theta})$ is a regularization term.

As compared to conventional vector-to-vector regression models, the FNO more naturally captures the continuous nature of the SFC. 
\textcolor{black}{Its spectral convolution provides a non-local parameterization of the reconstruction operator motivated by the Green's-function representation in \eqref{greenfunc}.}
That is, if we modify the spatial sampling grid or increase the port density of the FAS during inference, the neural network should output a consistent continuous channel field $H(s, f)$ without requiring retraining. 
\textcolor{black}{Here, \textit{resolution invariance} means that a learned operator can be evaluated on spatial--frequency discretizations unseen during training, enabling direct cross-resolution channel reconstruction without resolution-specific retraining or explicit output interpolation.}
\textcolor{black}{The finite-aperture spectral structure further motivates spectral-domain reconstruction, while the learnable threshold introduced below suppresses weak, noise-dominated spectral coefficients.}

\begin{figure*}[!t]
\centering
\includegraphics[width=\textwidth]{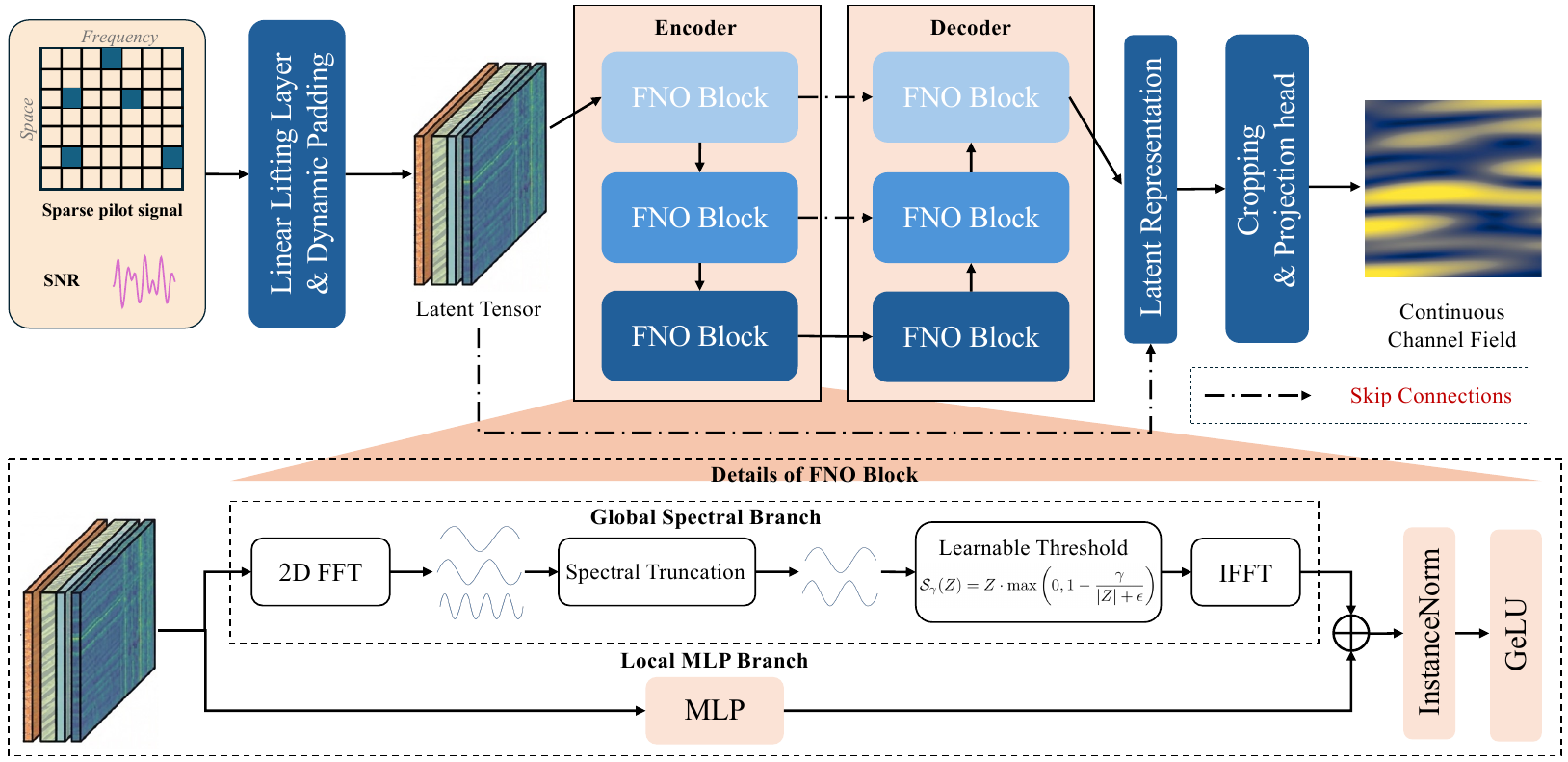}
\caption{Detailed architecture of the proposed UFNO for continuous SFC reconstruction. Top: an end-to-end U-shaped operator pipeline that maps sparse pilot observations to the complex channel field $\hat{H}(s,f)$ through lifting, multi-scale encoding/decoding, and projection; bottom: the internal Fourier block, where a global spectral branch (2D FFT, mode truncation, and learnable SST) is fused with a local MLP branch to jointly capture long-range wave propagation and fine-grained local scattering, while skip connections preserve high-frequency details during reconstruction.}
\label{fig::ufno_net}
\end{figure*}
\subsection{UFNO Architecture Design}
We now detail the end-to-end architecture of the proposed UFNO framework. 
It consists of $2P$ blocks, where the first $P$ blocks form a down-sampling path to extract multi-scale features, and the subsequent $P$ blocks constitute an up-sampling path to reconstruct the high-resolution channel field. 
The architecture is designed to map the sparse pilot observations to the continuous SFC field through three sequential stages: state initialization, dual-branch representation learning, and multi-scale field projection.

\subsubsection{State initialization}:
The data flow begins with the construction of a high-dimensional input tensor from the sparse and noisy observations $\mathbf{y}_{\text{obs}} \in \mathbb{C}^{M}$. 
Specifically, let $N_s$ denote the number of spatial grid points, the input is structured into a multi-channel tensor $\mathbf{X} \in \mathbb{R}^{N_s \times N_f \times 6}$, which encapsulates: 1) the real and imaginary components on the observed pilots; 2) a binary mask identifying the sampled ports; 3) the normalized spatial-frequency grid $(s, f)$ to encode spatial phase references; and 4) a local SNR indicator. 
Subsequently, a point-wise lifting layer $\mathcal{P}: \mathbb{R}^6 \to \mathbb{R}^{d_h}$ is applied to project these raw features into a higher-dimensional latent space of dimension $d_h$, which can be expressed as
\begin{equation}
v_0(s, f) = \operatorname{GELU}(\mathbf{W}_{\text{up}} \mathbf{X}(s, f) + \mathbf{b}_{\text{up}}),
\end{equation}
where $\mathbf{W}_{\text{up}}$ and $\mathbf{b}_{\text{up}}$ denote the learnable weight 
matrix and bias vector, respectively, and $\operatorname{GELU}(x) = x \cdot \Phi(x)$ is the gaussian error linear unit activation \cite{hendrycks2016gaussian}, with $\Phi(\cdot)$ denoting the standard gaussian cumulative distribution function.
{\color{black}Specifically, $\operatorname{GELU}(\cdot)$ provides a smooth transition in the pointwise nonlinear branch rather than the derivative discontinuity of ReLU at the origin, which is generally adopted for derivative-sensitive supervision.}
To mitigate the aperiodic boundary artifacts inherent in the subsequent fast Fourier transform (FFT) operations, the lifted tensor $v_0$ is dynamically zero-padded along both the spatial and frequency dimensions. 
This dynamic padding mechanism provides the framework with size adaptability, enabling it to seamlessly process inputs of varying aperture lengths and subcarrier numbers.
\textcolor{black}{An ablation over $5{,}120$ test channels shows that removing the padding degrades the overall, boundary-region, and interior NMSE by $1.45$, $1.59$, and $1.33$ dB, respectively, confirming that the resulting edge ringing is strongest near the boundaries but also propagates into the interior through the global spectral operator.}

\subsubsection{Dual-Branch Fourier Representation Learning}:
Following the initialization, the padded representation tensor $v_0 \in \mathbb{R}^{S' \times F' \times d_h}$ (where $S'$ and $F'$ denote the dynamically padded spatial and frequency dimensions, respectively) is fed into a cascade of stacked Fourier blocks. 
These blocks serve as the core representation learning engine of the UFNO. 
Each block employs a dual-branch structure to concurrently process the global wave propagation and local scattering details. 
For the $l$-th layer, the representation tensor $v_l \in \mathbb{R}^{S' \times F' \times d_h}$ is updated as
\begin{equation*}
v_{l+1} = \operatorname{InstanceNorm} \left( \sigma  (\mathcal{K}_{\text{global}}(v_l) + \mathcal{W}_{\text{local}}(v_l) )\right),
\end{equation*}
where $\sigma(\cdot)$ is a non-linear activation function, and $\operatorname{InstanceNorm}(\cdot)$ denotes the instance normalization operation applied across the spatial-frequency dimensions. The two parallel branches are meticulously parameterized and physically interpreted as follows:
\begin{itemize}
\item [$\bullet$]\emph{Global Spectral Branch $\mathcal{K}_{\text{global}}(\cdot)$}: This branch implements a parameterized integral operator in the wavenumber domain, designed to extract global macroscopic scattering features. 
To filter high-frequency noise and align with the physical constraints of the propagation environment, the branch first maps the latent representation $v_l$ into the spectral domain through a spectral convolution pipeline (denoted by $\mathcal{V}_l = \mathcal{F}_{\text{mix}}(v_l)$), where we apply a real-to-complex transform on the frequency axis and a Fourier transform on the spatial axis.
{\color{black}
To reduce computational complexity and suppress out-of-band components, we retain a low-mode subset of the mixed spectrum, i.e., $\text{Trunc}(\mathcal{V}_l)$ with spatial and frequency mode budgets.
The implementation parameter $K_s^{\rm side}$ denotes the number of retained spatial Fourier coefficients on each side of the spectrum, and $K_s^{\rm total}=2K_s^{\rm side}$ denotes the corresponding total two-sided budget; $K_f$ denotes the retained frequency-mode budget.
The aperture-dependent $N_{\rm DoF}$ in Lemma~\ref{lm1} provides a physical reference for selecting $K_s^{\rm total}$.
Because the finite-aperture singular functions and the discrete Fourier coefficients are different bases, and because useful recovery also depends on SNR and learned representation capacity, we do not impose $K_s^{\rm total}=N_{\rm DoF}$ as an exact design rule.
}

The retained modes then undergo a learnable complex linear mixing followed by a sparse-promoting denoising operation. To further enhance robustness against in-band receiver noise, we introduce a learnable \textit{spectral soft-thresholding} (SST) mechanism, $\mathcal{S}_{\boldsymbol{\gamma}}(\cdot)$. Here, mode truncation provides explicit low-pass support, while SST performs amplitude-adaptive shrinkage to suppress low-energy coefficients that are typically noise-dominated. The global spectral operation is formulated as:
\begin{equation}
    \mathcal{K}_{\text{global}}(v_l) =  \mathcal{F}^{-1}_{\text{mix}} \left( \mathcal{S}_{\boldsymbol{\gamma}} \left( \mathbf{W} \circledast \text{Trunc}(\mathcal{V}_l) \right) \right),
\end{equation}
where $\circledast$ denotes complex-valued tensor contraction (e.g., Einstein summation) across latent dimensions, and $\mathcal{F}^{-1}_{\text{mix}}(\cdot)$ denotes the inverse mixed transform back to the spatial-frequency grid. The element-wise SST operator $\mathcal{S}_{\boldsymbol{\gamma}}(\cdot)$ for a complex spectral entry $\mathbf{Z}$ is:
\begin{equation}
    \mathcal{S}_{\boldsymbol{\gamma}}(\mathbf{Z}) = \mathbf{Z} \cdot \max\left(0, 1 - \frac{\boldsymbol{\gamma}}{|Z| + \epsilon}\right),
\end{equation}
where $\boldsymbol{\gamma} \in \mathbb{R}_{+}$ is a learnable scalar threshold shared across channels and modes, and $\epsilon$ is a small constant for numerical stability. \textcolor{black}{By integrating this sparsity-inducing threshold into the wavenumber domain, UFNO adaptively suppresses weak, noise-dominated spectral coefficients while retaining the dominant global scattering structure.}

\item [$\bullet$]\emph{Local MLP Branch $\mathcal{W}_{\text{local}}(\cdot)$}: In parallel, a point-wise multi-layer perceptron (MLP), implemented via $1 \times 1$ convolutions, operates strictly along the channel dimension $\mathbb{R}^{d_h} \to \mathbb{R}^{d_h}$ without expanding the receptive field. 
{\color{black}While the spectral branch models non-local spatial--frequency dependencies, spectral truncation can attenuate fine-scale variations. The local branch complements it by representing residual local channel structure and nonlinear feature transformations that are not captured by the retained spectral coefficients.}
\end{itemize}

Compared with standard CNNs or fully connected networks, the UFNO more naturally captures the fundamental physics of electromagnetic wave propagation. 
By explicitly embedding the integral operator structure into the neural architecture, the UFNO is better equipped to model the non-local interactions characteristic of scattering environments. 
In particular, while a standard convolutional layer is limited by a small local receptive field (e.g., $3 \times 3$ kernels), the Fourier layer in the UFNO performs multiplication in the spectral domain, which is mathematically equivalent to a global convolution in the spatial domain. 
This mechanism enables every point in the reconstructed field to instantaneously aggregate information from the entire aperture, effectively mimicking the Huygens-Fresnel principle, where the field at any point is a superposition of wavelets from all sources. 

\begin{remark}
In practical wireless scenarios, sparse observations and additive white Gaussian noise (AWGN) inevitably introduce spurious high-frequency aliasing. The proposed SST operation adaptively filters out low-amplitude spectral components, effectively denoising the SFC field by constraining the solution to reside on the physical wave manifold.
\end{remark}

\subsubsection{U-Shaped Multiscale Fusion}
To construct the complete continuous field, the aforementioned Fourier blocks are organized into a symmetric encoder-decoder U-shaped architecture. 
During the encoding phase, successive SST operations act as sequential low-pass filters, extracting the coarse-grained, low-frequency eigen-modes (e.g., macroscopic angles of arrival and large-scale shadowing). 
During the decoding phase, symmetric skip connections are employed to prevent the attenuation of high-frequency local variations. 
Specifically, the feature map from the $l$-th encoder layer is directly added to the input of the corresponding decoder layer:
\begin{equation}
v_{\text{dec}}^{(l)} = \mathcal{H}_{\text{block}} \left( v_{\text{dec}}^{(l-1)} + v_{\text{enc}}^{(P-l)} \right),
\end{equation}
where $\mathcal{H}_{\text{block}}(\cdot)$ denotes the FNO block in the decoder, and $P$ is the total number of encoder layers. This multi-scale fusion strategy ensures that the precise spatial phase alignment and sub-wavelength small-scale fading details are organically fused with the global scattering structures.
In the final stage, the reconstructed tensor is cropped back to its original physical dimensions to remove the padded boundaries. 
An MLP-based projection head $\mathcal{Q}: \mathbb{R}^{d_h} \to \mathbb{R}^2$ is then applied point-wise to map the high-dimensional latent vectors back to the physical SFC domain.
\subsection{\textcolor{black}{Loss Design and Complexity Analysis}}

\textcolor{black}{The UFNO parameters are learned through supervised offline training and remain frozen during online inference.}

\subsubsection{Sobolev Loss Design}
\textcolor{black}{The UFNO is trained on a large-scale synthetic dataset generated according to the channel model, where the full-field ground-truth SFC, denoted as $H(s, f)$, is available.}
Standard pixel-wise reconstruction objectives, such as the Mean Squared Error (MSE), often yield over-smoothed channel estimates. These methods tend to average out high-frequency variations, which is detrimental for FAS where phase precision is paramount.
To this end, we employ a Sobolev-space loss $\mathcal{L}_{\text{Sobolev}}$ that explicitly constrains both the amplitude and the derivatives of the reconstructed field. The offline training objective is formulated as:
\begin{equation}
\begin{split}
&\mathcal{L}_{\text{Sobolev}} = \| H - \hat{H} \|_{\mathcal{F}}^2 + \beta \mathcal{L}_{\text{Grad}},\\
&\mathcal{L}_{\text{Grad}} =  \| \nabla_s H - \nabla_s \hat{H} \|_{\mathcal{F}}^2 + \| \nabla_f H - \nabla_f \hat{H} \|_{\mathcal{F}}^2,
\end{split}
\end{equation}
where $\|\cdot\|_{\mathcal{F}}$ denotes the Frobenius norm over the discretized spatio-frequency grid, and $\beta$ is a hyperparameter balancing the value and gradient terms. 
\textcolor{black}{In particular, the spatial gradient $\nabla_s H$ characterizes the local spatial phase progression associated with the angles of arrival, while the frequency gradient $\nabla_f H$ characterizes the frequency-selective variation associated with propagation delays. The Sobolev term therefore promotes consistency in physically meaningful spatial--frequency variations; it is a derivative-consistency regularizer rather than a Helmholtz-residual penalty and does not explicitly evaluate $\mathcal{L}_{\kappa}\widehat H-\xi$ during training.}

A distinguishing feature of the proposed UFNO is its \textbf{resolution-invariance}, the learned parameters define a continuous operator independent of the discretization grid. 
However, for practical implementation on digital hardware, the continuous field must be evaluated on a discretized mesh. 
Here, we analyze the computational complexity required to evaluate the operator on a grid of $N = N_s \times N_f$ points, effectively characterizing the model's scalability as the discretization approaches the continuum limit.

\subsubsection{Parameter Complexity}
Crucially, this is independent of the spatial resolution $N$. This property allows the model to be trained on coarse grids and deployed on fine grids without parameter re-scaling.

\subsubsection{Computational Complexity}
We compare the asymptotic floating-point operations (FLOPs) per layer required for a forward pass against state-of-the-art baselines. Here, $N = N_s \times N_f$ denotes the total number of spatio-frequency grid points, and $d_h$ denotes the hidden channel dimension.

{\color{black}\textbf{Proposed UFNO}: The computational bottleneck lies in the 2D FFT/iFFT and the spectral multiplication. For $N$ spatial--frequency samples and $d_h$ hidden channels, the FFTs scale as $\mathcal{O}(d_hN\log N)$. With a fixed retained spectrum of size $K_s^{\rm total}\times K_f$, the spectral multiplication scales as $\mathcal{O}(K_s^{\rm total}K_fd_h^2)$ and is independent of the evaluation resolution. Thus, the per-layer complexity is
\begin{equation}
    \mathcal{C}_{\text{UFNO}} = \mathcal{O}(d_h N \log N + K_s^{\rm total} K_f d_h^2).
\end{equation}
UFNO achieves a \textbf{global receptive field} within a single layer due to the global nature of the Fourier basis.}

\textbf{ViT:} Standard ViTs rely on global self-attention mechanisms to capture long-range dependencies. The complexity encompasses the sequence-to-sequence attention map generation and the linear projections:
\begin{equation}
    \mathcal{C}_{\text{ViT}} = \mathcal{O}(N^2 d_h + N d_h^2).
\end{equation}
While hardware-aware optimizations like FlashAttention \cite{dao2022flashattention} significantly reduce memory IO and latency, the fundamental \textit{algorithmic complexity} remains strictly quadratic w.r.t. the grid size $N$, posing a severe bottleneck for dense spatial sampling.

\textbf{CNN:} Standard convolutional layers with kernel size $K \times K$ scale linearly with $N$ but quadratically with the channel dimension:
\begin{equation}
    \mathcal{C}_{\text{CNN}} = \mathcal{O}(N_{\text{layers}} \cdot N K^2 d_h^2),
\end{equation}
where $N_{\text{layers}}$ is the number of stacked layers. Unlike UFNO and ViT, a single CNN layer only captures local features. To achieve a global receptive field covering the entire $\sqrt{N} \times \sqrt{N}$ grid, the network must stack $N_{\text{layers}} \sim \mathcal{O}(\sqrt{N}/K)$ layers. Consequently, the effective complexity to capture the global multipath correlations scales up to $\mathcal{O}(N^{1.5} K d_h^2)$, losing its linear advantage.

\textbf{Scalability Argument:} In the context of FAS, while the number of hardware-limited RF chains $M$ remains small, characterizing the continuous-aperture potential requires high-resolution reconstruction over a dense discrete grid ($N \gg M$). As we refine the grid size $N$ to approach the continuum limit, the quadratic complexity $\mathcal{O}(N^2)$ of ViT and the depth-dependent penalty $\mathcal{O}(N^{1.5})$ of CNNs become computationally prohibitive for real-time inference. In stark contrast, the \textbf{quasi-linear complexity} $\mathcal{O}(N \log N)$ of the proposed UFNO inherently decouples the global receptive field from the grid resolution, ensuring that achieving ultra-high-resolution channel maps, essential for sub-wavelength port selection, incurs only marginal overhead. This theoretically establishes UFNO as the most viable architecture for real-time continuous-aperture channel estimation.

\section{Simulation Results}\label{SecSim}
\begin{table}[!t]
\centering
\caption{Trainable Parameters of the Proposed UFNO.}
\label{tab:ufno_params}
\renewcommand{\arraystretch}{1.15}
\setlength{\tabcolsep}{3.5pt}
\resizebox{\linewidth}{!}{%
\begin{tabular}{l l r r}
\toprule
\textbf{Component} & \textbf{Dimensionality} & \textbf{\# Params} & \textbf{\%} \\
\midrule
\multicolumn{4}{l}{\emph{Lifting (input $\to$ latent)}} \\
\quad $\mathcal{P}$: Linear & $\mathbb{R}^{6 \times d}$ (+ bias) & 448 & ${<}0.01$ \\
\midrule
\multicolumn{4}{l}{\emph{Encoder / Decoder FNO Blocks ($\times 2L = 10$)}} \\
\quad Spectral weights $\mathbf{R}_1$ & $\mathbb{C}^{d \times d \times k_{s} \times k_{f}}$ & 2{,}097{,}152 & --- \\
\quad Spectral weights $\mathbf{R}_2$ & $\mathbb{C}^{d \times d \times k_{s} \times k_{f}}$ & 2{,}097{,}152 & --- \\
\quad Soft-threshold parameter $\boldsymbol{\gamma}$ & $\mathbb{R}^{1 \times d \times 1 \times 1}$ & 64 & --- \\
\quad \textit{Subtotal: Spectral conv (per block)} & & \textit{4{,}194{,}368} & --- \\[2pt]
\quad MLP skip $\mathbf{W}_1$ (1$\times$1 conv) & $\mathbb{R}^{2d \times d}$ (+ bias) & 8{,}320 & --- \\
\quad MLP skip $\mathbf{W}_2$ (1$\times$1 conv) & $\mathbb{R}^{d \times 2d}$ (+ bias) & 8{,}256 & --- \\
\quad \textit{Subtotal: MLP skip (per block)} & & \textit{16{,}576} & --- \\[2pt]
\quad Instance Norm & (no affine) & 0 & --- \\
\quad \textit{Total per FNO block} & & \textit{4{,}210{,}944} & --- \\
\midrule
All Spectral Conv ($\times 10$ blocks) & & 41{,}943{,}680 & 99.6 \\
All MLP Skip Paths ($\times 10$ blocks) & & 165{,}760 & 0.4 \\
\midrule
\multicolumn{4}{l}{\emph{Projection (latent $\to$ output)}} \\
\quad $\mathcal{Q}_1$: Linear & $\mathbb{R}^{d \times 128}$ (+ bias) & 8{,}320 & 0.02 \\
\quad $\mathcal{Q}_2$: Linear & $\mathbb{R}^{128 \times 2}$ (+ bias) & 258 & ${<}0.01$ \\
\midrule
\textbf{Grand Total} & & \textbf{42{,}118{,}466} & \textbf{100} \\
\bottomrule
\end{tabular}%
}
\end{table}
In this section we validate the theoretical analysis and evaluate the performance of the proposed algorithm. 
\subsection{Simulation Setup}
We evaluate the performance of the proposed UFNO in a geometry-based multipath environment. 
The carrier frequency is set to $3.5$ GHz. The FAS is deployed at the receiver with a physical aperture length of $4\lambda$ (i.e., $W=4$). 
The continuous aperture is discretized into $N_s = 48$ spatial grid points (candidate ports). 
In the frequency domain, we consider an OFDM system with $N_f = 48$ subcarriers and a subcarrier spacing of $15$ kHz. The multipath channel is generated with $N_p = 20$ paths and a delay spread of $300$ ns.
For the pilot arrangement, we employ a uniform sampling strategy with a spacing of $4$ in both spatial and frequency dimensions. During the offline training stage, we generate $51200$ training samples and $5120$ validation samples, where SNR of samples is uniformly sampled from $[-20, 30]$ dB.

\textcolor{black}{The network is trained for $30$ epochs using a batch size of $64$, a learning rate of $5 \times 10^{-3}$, and a weight decay of $1 \times 10^{-5}$. The hyperparameter for $\mathcal{L}_{\text{Grad}}$ is set to $\beta = 0.01$. During the testing phase, the SNR is fixed at $20$ dB unless otherwise specified.}
Besides, we list all trainable components of the proposed UFNO architecture and their dimensionalities in Table~\ref{tab:ufno_params}. 
\textcolor{black}{The encoder--decoder uses $L=5$ FNO blocks per arm and a channel width of $d=64$. The implementation retains $K_s^{\rm side}=16$ spatial Fourier coefficients per side, corresponding to $K_s^{\rm total}=32$ across the two sides of the spatial spectrum, and uses $K_f=16$ frequency modes.}
As shown in Table~\ref{tab:ufno_params}, the spectral convolution path accounts for $99.6\%$ of all trainable parameters, while the point-wise MLP skip connections contribute only $0.4\% (165760$ parameters total across all 10 blocks). Each MLP branch is a two-layer $1{\times}1$ convolution with hidden dimension $2d{=}128$ and GELU activation, serving as a local, high-frequency complement to the global spectral convolution, it is clearly not large enough to perform the full channel estimation task on its own.
The learnable soft-threshold vectors $\boldsymbol{\gamma} \in \mathbb{R}^d$ ($64$ scalars per block, $640$ total) add negligible overhead while enabling adaptive spectral denoising.

To demonstrate the effectiveness of the proposed UFNO framework, we compare its performance against several representative baseline methods, encompassing both traditional signal processing and classical deep learning paradigms:
\begin{itemize}
    \item \textbf{Least Squares (LS):} A non-parametric baseline that applies LS estimation at pilot positions followed by nearest-neighbor interpolation. While computationally efficient, it lacks the ability to exploit global spatio-frequency correlations.
    \item \textbf{OMP / SBL:} Traditional CS and Bayesian recovery methods, respectively. Both utilize a $4\times$ oversampled spatio-frequency dictionary to mitigate off-grid effects. OMP employs greedy residual correlation, while SBL adopts hierarchical Gaussian priors for improved robustness in noisy settings at the cost of higher complexity.
    \item \textbf{DC-GOMP \cite{dong2025frequency}:} A group-greedy extension of OMP that jointly selects the top-$G$ atoms. The group size $G$ is tuned to the spectral leakage width, better capturing spread energy than standard OMP, though it remains sensitive to dictionary-mismatch.
    \item \textcolor{black}{\textbf{CNN:} A four-layer convolutional baseline with channel dimensions $64$, $128$, $128$, and $2$, using $3\times3$ kernels in the hidden layers. It captures local features through nonlinear interpolation.}
    \item \textcolor{black}{\textbf{ViT:} A global-attention baseline that processes the spatial--frequency map as $8\times 8$ patches, with embedding dimension $128$, four attention heads, 10 Transformer layers, and feed-forward dimension $256$.}
    \item \textbf{UFNO w/o SST \& w/o S-Loss:} Ablated variants of the proposed architecture, respectively excluding the learnable SST mechanism and the Sobolev loss.
\end{itemize}
All learning-based methods are trained on the same dataset and evaluated under identical conditions to ensure a fair comparison. \textcolor{black}{The CNN and ViT configurations are fixed representative baselines rather than the outcomes of an exhaustive task-specific hyperparameter search; their dimensions are selected before final test evaluation, and the test set is not used for model selection.} All simulation results are averaged over $5120$ independent channel realizations to ensure statistical significance.

\subsection{\textcolor{black}{Physics-Guided Spatial Mode Selection}}\label{subsec:edof_mode_selection}
{\color{black}
We investigate whether the aperture-dependent EDoF can provide an interpretable physical reference for configuring the spatial spectral capacity of UFNO, rather than selecting the retained Fourier modes solely through empirical tuning.
For $W\in\{2,4,6\}$, corresponding to $N_{\rm DoF}\in\{5,9,13\}$, we hold the $48\times48$ spatial--frequency resolution, $720$-kHz bandwidth, channel model, training-set size, and optimization settings fixed.
A separate UFNO is trained for each tested value of $K_s^{\rm total}$ below, near, or above the corresponding EDoF bound, and each entry in Table~\ref{tab:aperture_modes} is the complex NMSE averaged over $512$ independently generated channels.
At $20$ dB, moving from the tested budget below the EDoF bound to the first tested budget above it improves NMSE by $0.56$, $1.26$, and $1.02$ dB for $W=2,4,6$, respectively; at $30$ dB, the corresponding improvements are $0.69$, $1.64$, and $1.64$ dB.
This consistent behavior across aperture sizes shows that retaining too few spatial modes constrains reconstruction once observation noise is no longer dominant.
At $0$ dB, increasing the mode budget gives no systematic improvement, indicating that recovery is noise limited and cannot be improved simply by retaining more Fourier coefficients.
Additional coefficients beyond the EDoF count can still improve reconstruction because the physical singular functions of the finite-aperture operator are not identical to the discrete Fourier basis used by UFNO.
Therefore, the EDoF translates the physical aperture into an interpretable scale for selecting $K_s^{\rm total}$: it identifies the mode budget required to avoid an undersized spatial representation, while SNR determines how much of that capacity can be usefully exploited.
}

\begin{table}[t]
\centering
\caption{\textcolor{black}{Physics-guided selection of the total two-sided spatial mode budget: complex NMSE across apertures and SNRs (lower is better).}}
\label{tab:aperture_modes}
\setlength{\tabcolsep}{3pt}
{\color{black}
\begin{tabular*}{0.85\linewidth}{@{\extracolsep{\fill}}cccrrr}
\toprule
$W$ & $N_{\rm DoF}$ & $K_s^{\rm total}$ & $0$ dB & $20$ dB & $30$ dB \\
\midrule
2 & 5 & 4 & -15.36 & -27.95 & -28.78 \\
  &   & 6 & -14.96 & -28.50 & -29.47 \\
  &   & 8 & -14.96 & -28.72 & -30.00 \\
\midrule
4 & 9 & 6  & -12.90 & -26.24 & -27.26 \\
  &   & 10 & -12.56 & -27.51 & -28.91 \\
  &   & 14 & -12.42 & -29.12 & -31.96 \\
\midrule
6 & 13 & 10 & -11.59 & -25.31 & -26.50 \\
  &    & 14 & -11.21 & -26.32 & -28.13 \\
  &    & 18 & -11.30 & -26.62 & -28.55 \\
\bottomrule
\end{tabular*}}
\end{table}

\subsection{Performance Comparison}
\begin{figure}[t]
\centering
\includegraphics[width=\linewidth]{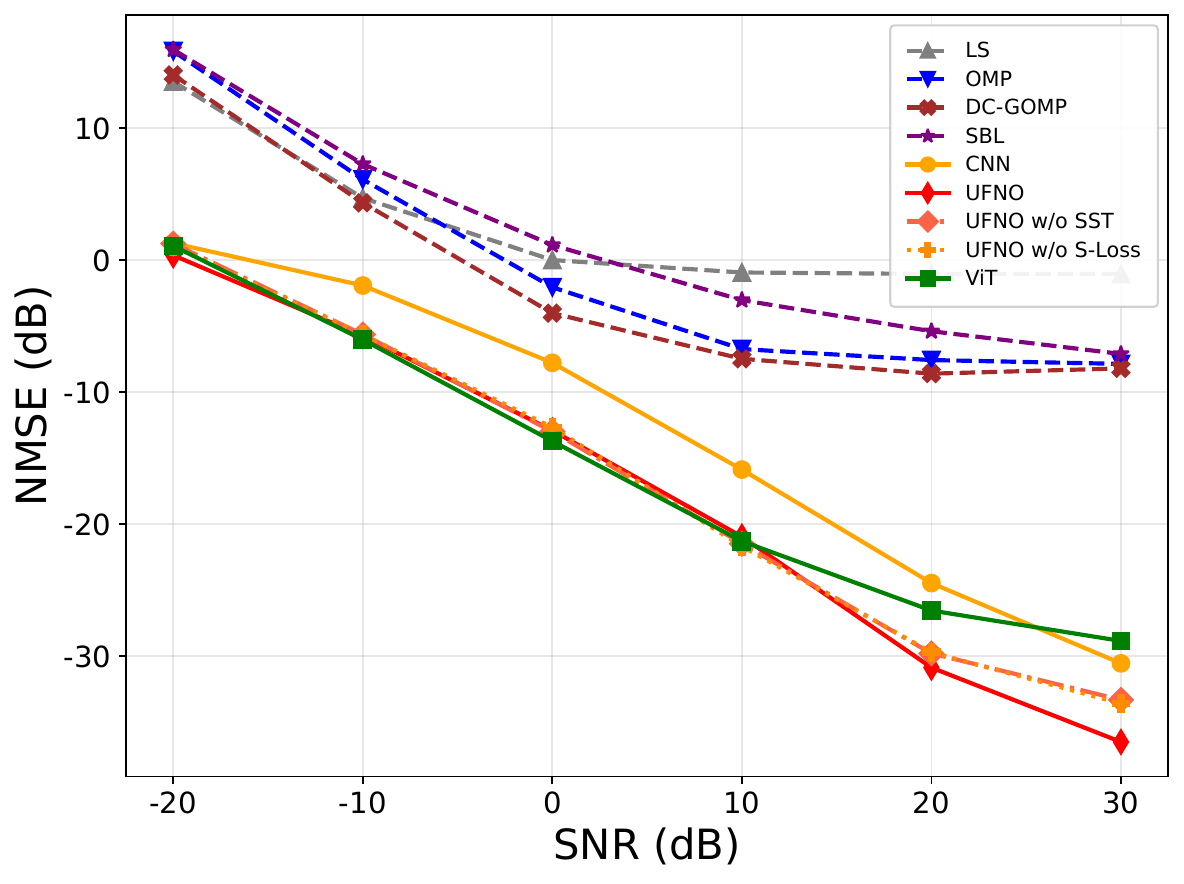}
\caption{NMSE comparison under varying SNR.}
\label{fig::performance_curve_snr}
\end{figure}
\begin{figure}
\centering
\includegraphics[width=\linewidth]{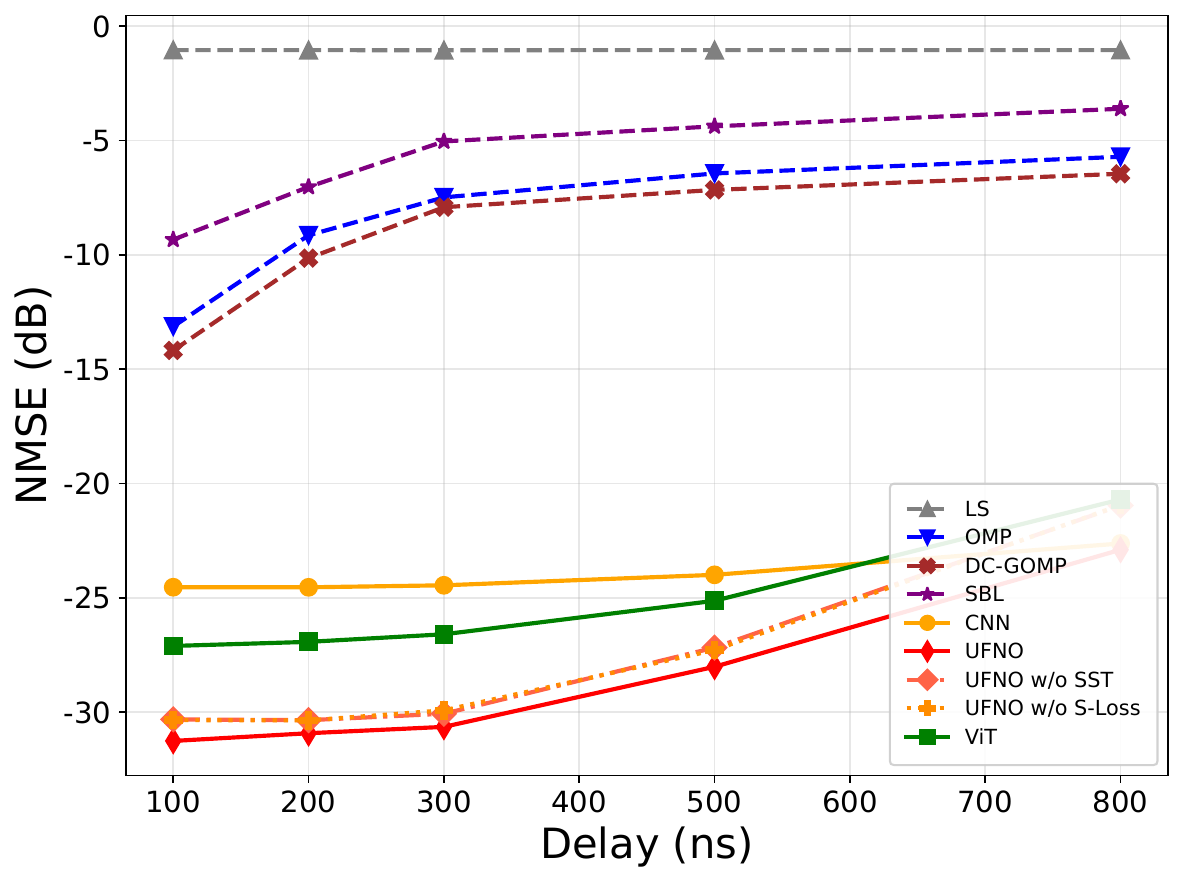}
\caption{NMSE comparison under varying delay spreads (OOD robustness test).}
\label{fig::performance_curve_delay}
\end{figure}
The performance of the proposed UFNO and the benchmark methods is evaluated in terms of the normalized mean squared error (NMSE) between the reconstructed channel $\hat{H}$ and the ground truth $H$, defined as
\begin{equation}
\text{NMSE}(\hat{H}, H) = 10 \log_{10} \left( \frac{\| H - \hat{H} \|_F^2}{\| H \|_F^2} \right).
\end{equation}
Fig. \ref{fig::performance_curve_snr} illustrates the NMSE performance across a wide range of SNR levels, where the proposed UFNO consistently maintains a superior performance envelope. While learning-based architectures significantly surpass traditional model-driven methods by effectively leveraging data-driven field priors, UFNO demonstrates the most robust scaling behavior as the SNR increases. Although UFNO and \textbf{ViT} achieve comparable NMSE in the extremely low-SNR regime, UFNO exhibits a significantly more pronounced performance gain as the noise floor drops, achieving an NMSE below $-35$ dB at $30$ dB SNR and outperforming the \textbf{ViT} and \textbf{CNN} baselines by over $5$ dB. 
The efficacy of our architectural innovations is further elucidated by the performance of the ablated variants in high-SNR environments. As shown in Fig. \ref{fig::performance_curve_snr}, the full UFNO begins to diverge from \textbf{UFNO w/o SST} for SNR $> 10$ dB, reaching a $3$ dB performance margin at $30$ dB. Similarly, \textbf{UFNO w/o S-Loss} trails the full architecture by approximately $3$ dB in the high-SNR regime. These results suggest that while SST is instrumental in adaptively suppressing high-frequency spectral leakage that would otherwise be masked by noise, the Sobolev loss is critical for preserving high-gradient field structures and averting accuracy saturation as the observation quality improves.

Fig.\ref{fig::performance_curve_delay} further evaluates the robustness under increasing delay spread. Physically, a larger delay spread induces faster phase rotations across the frequency domain, intrinsically intensifying the frequency selectivity. 
\textcolor{black}{Under a fixed sampling ratio, this stronger frequency selectivity increases the number of significant frequency-domain components and worsens the conditioning of the underdetermined inverse problem; it does not change the spatial EDoF in \eqref{eq:edof}, which is determined by the aperture.} 
Crucially, as the learning-based models are trained solely on a nominal $300$ ns delay spread, shifting to extremes (e.g., $800$ ns) introduces an out-of-distribution (OOD) shift. Interestingly, \textbf{CNN} exhibits extreme robustness (a nearly flat curve), acting as a localized nonlinear interpolator: its local receptive fields are insensitive to global phase variations, preventing severe OOD collapse albeit at the cost of a poor baseline performance. In contrast, models capturing global dependencies (\textbf{ViT} and UFNO) experience more noticeable degradation under the extreme $800$ ns regime. For UFNO, this behavior is physically interpretable: at enormous delay spreads, the rapid frequency responses shift the intrinsic signal energy into higher spectral modes. 
Consequently, the low-pass mode truncation mechanism (e.g., $K_f = 16$) inherently cuts off these emerging high-frequency channel components. 
We observe that removing either the SST or Sobolev loss modules compromises reconstruction stability as the delay spread deviates from the training set. Notably, \textbf{UFNO w/o S-Loss} exhibits the most pronounced degradation in the high-delay regime (e.g., around $800$ ns), indicating that physics-aware gradient regularization is vital for maintaining field consistency beyond nominal conditions. Nevertheless, the full UFNO maintains a dominant performance advantage over \textbf{ViT} and \textbf{CNN} architectures across the $100$ ns to $500$ ns range. Compared to classical CS-based methods, our framework provides consistent, multi-order-of-magnitude NMSE improvements over the entire evaluated domain, demonstrating the superior generalization power of physics-informed operator learning in complex EM environments.

\subsection{Sampling Ratio}

\begin{figure}[t]
\centering
\includegraphics[width=\linewidth]{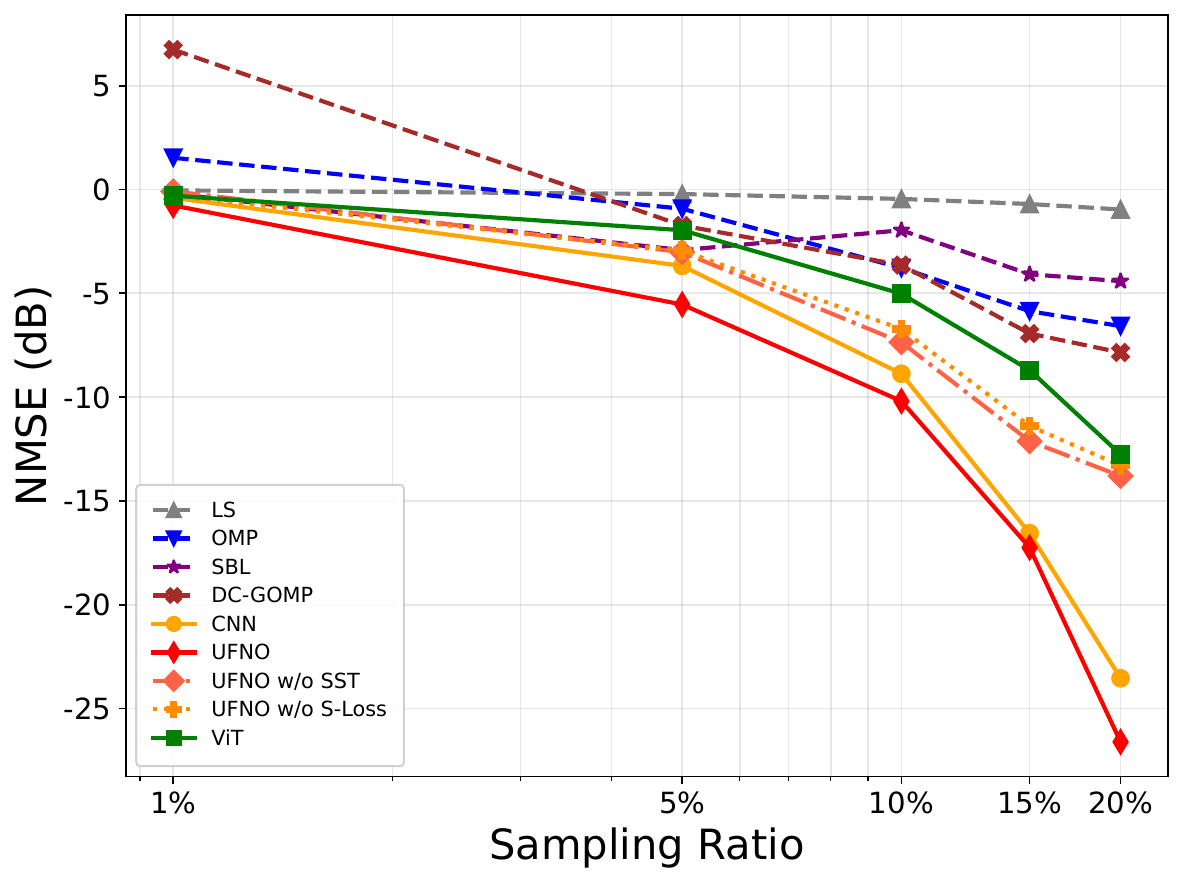}
\caption{NMSE comparison under varying spatio-frequency sampling ratio. }
\label{fig::performance_curve_sampling}
\end{figure}
\begin{figure*}[t]
\centering
\includegraphics[width=\textwidth]{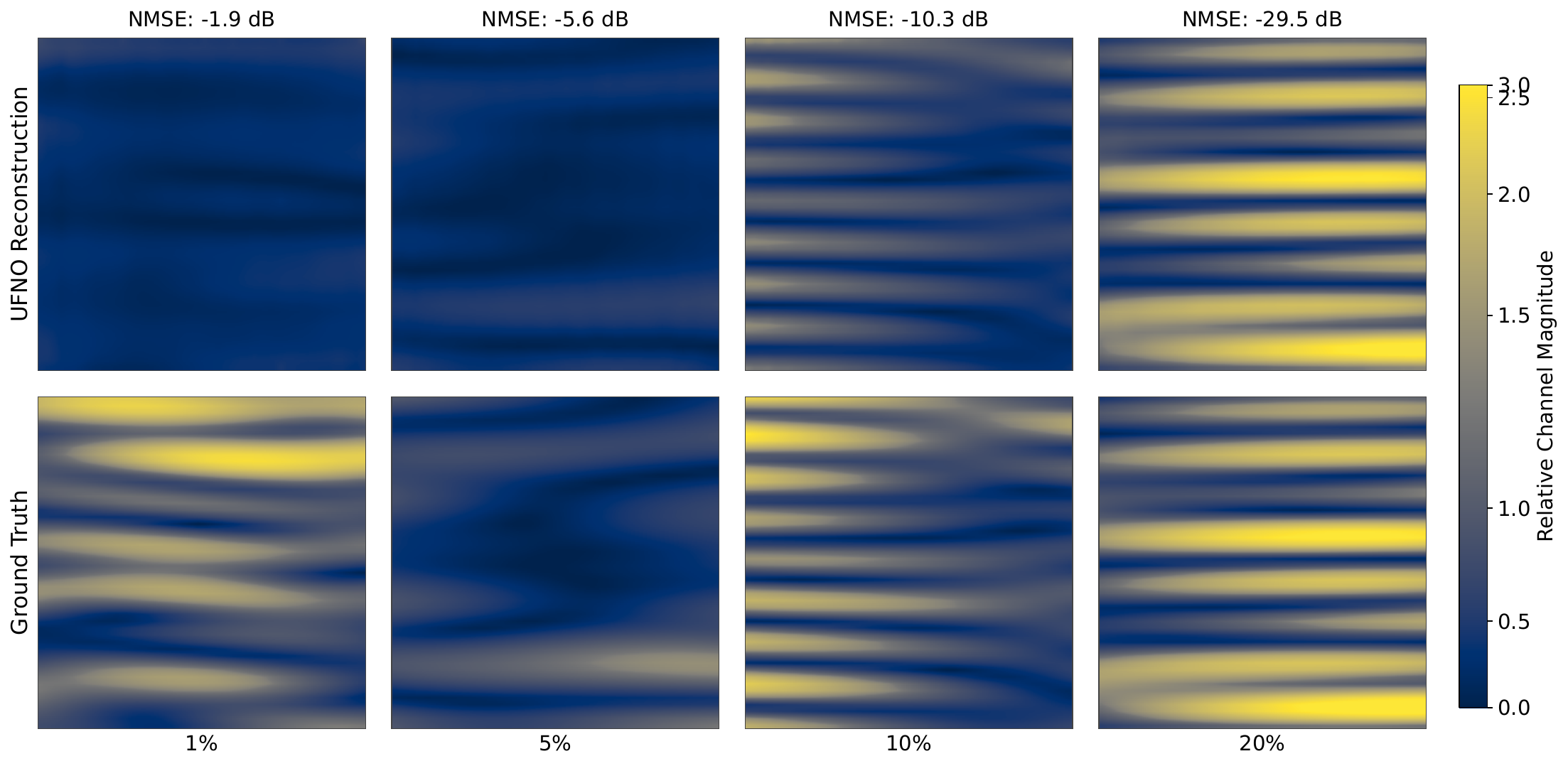}
\caption{Qualitative reconstruction heatmaps under different sampling ratios. \textcolor{black}{All panels show the same randomly selected channel realization, and the annotated NMSE values are realization-specific.}}
\label{fig::sampling_heatmaps}
\end{figure*}

We compare the proposed UFNO with the benchmark methods under varying sampling ratio, defined as 
\begin{equation}\label{eq:sampling_ratio}
\# \text{Sampling Ratio} = \frac{N_{\mathrm{occ}}}{N},
\end{equation} 
where $N_{\mathrm{occ}}$ is the number of occupied spatio-frequency resource blocks, and $N = N_s\times N_f$ represents the number of total available resource blocks in space and frequency domain.
It is worth noting that both uniform and random occupation patterns are evaluated to demonstrate the flexibility of the proposed method in handling arbitrary spatio-frequency resource-block allocations.

As illustrated in Fig.~\ref{fig::performance_curve_sampling}, the NMSE performance of all evaluated methods is presented as a function of the sampling ratio $\in [1\%, 20\%]$. 
\textcolor{black}{Each point on this curve reports the complex NMSE averaged over $5{,}120$ independently generated test channels.}
In extremely sparse regimes, the limited resource-block is fundamentally insufficient to capture the underlying channel manifold, leading to severe ambiguity for all estimators. 
As the sampling ratio increases, traditional model-driven methods struggle to show continuous proportional improvements. 
Specifically, the \textbf{LS} baseline exhibits a nearly flat trajectory due to its naive mapping that fails to spatially interpolate the unobservable zero-padded pcoints. 
Furthermore, compressed sensing methods (\textbf{OMP} and \textbf{SBL}) reveal sub-optimal and marginally jagged performance gains; this is primarily attributed to their rigid sparsity hyperparameter constraints (e.g., fixed maximum number of scatter paths) along with the inherent grid-mismatch uncertainty introduced by discrete uniform masking.

In contrast, deep learning-based methods exhibit significant performance gains as the sampling ratio increases, particularly as sampling ratio approaches the training distribution limit of $21.7\%$. Among these, the proposed UFNO consistently maintains a substantial lead over other learning-based baselines, achieving an NMSE below $-26$ dB at $\#\text{sampling ratio} = 20\%$, which significantly outperforms \textbf{CNN} and \textbf{ViT} models. 
This superiority is further elucidated by the performance of the ablated variants: as sampling ratio increases, the gap between the full UFNO and its counterparts widens markedly. Specifically, at $20\%$, the full model outperforms \textbf{UFNO w/o SST} and \textbf{UFNO w/o S-Loss} by approximately $13$--$14$ dB. This substantial margin underscores the necessity of adaptive spectral truncation and Sobolev-gradient supervision in highly under-determined reconstruction, confirming that the synergistic interplay between spectral manifold partitioning and gradient-level fidelity is critical for capturing fine-grained sub-wavelength structures.

Fig.~\ref{fig::sampling_heatmaps} provides qualitative evidence of how reconstruction fidelity evolves with sampling density. As sampling ratio increases from $1\%$ to $20\%$, the recovered field transitions from coarse low-frequency envelopes to structured multipath textures with progressively improved stripe orientation and amplitude contrast, with NMSE improving from approximately $-1.9$ dB to $-29.5$ dB, indicating that dominant spatio-frequency components become reliably recoverable beyond very sparse sampling.

\subsection{Resolution Invariance}

\begin{table*}[t]
\centering
\caption{\textcolor{black}{Controlled cross-resolution complex NMSE (dB). All learned estimators are trained only at $48\times48$ and frozen during evaluation. (lower is better.)}}
\label{tab:cross_resolution}
\color{black}
\renewcommand{\arraystretch}{1.12}
\scriptsize
\setlength{\tabcolsep}{3.5pt}
\begin{tabular*}{\textwidth}{@{\extracolsep{\fill}}c*{9}{r}@{}}
\toprule
& \multicolumn{4}{c}{Learned estimators} & \multicolumn{3}{c}{Sparse-pilot interpolation} & \multicolumn{2}{c}{$48\!\times\!48$ CNN upsampling} \\
\cmidrule(lr){2-5}\cmidrule(lr){6-8}\cmidrule(lr){9-10}
Spatial $\times$ frequency & \textbf{UFNO} & Standard FNO & CNN & ViT & Nearest & Linear & Cubic & Bilinear & Bicubic \\
\midrule
$48\times48$   & \textbf{-30.67} & -29.83 & -24.47 & -26.80 & -10.76 & -17.89 & -20.12 & -24.47 & -24.47 \\
$60\times60$   & \textbf{-29.55} &  -8.28 & -24.47 &  -9.45 & -12.50 & -20.39 & -21.04 & -24.47 & -24.44 \\
$72\times72$   & \textbf{-28.87} &  -7.09 & -24.13 &  -5.90 & -13.68 & -21.21 & -21.06 & -24.48 & -24.44 \\
$96\times96$   & \textbf{-29.50} &  -6.48 & -24.40 & -10.16 & -15.48 & -21.88 & -21.07 & -24.50 & -24.46 \\
$120\times120$ & \textbf{-29.46} &  -8.18 & -24.51 &  -5.82 & -16.64 & -22.06 & -21.08 & -24.50 & -24.47 \\
$144\times144$ & \textbf{-29.56} &  -8.55 & -24.38 &  -8.94 & -17.40 & -22.13 & -21.08 & -24.50 & -24.47 \\
\bottomrule
\end{tabular*}
\end{table*}
{\color{black}
We evaluate resolution invariance through a controlled fixed-domain comparison and a broader cross-resolution stress test. The controlled comparison isolates changes in the spatial--frequency discretization, whereas the stress test additionally examines increasingly large departures from the training configuration.
To isolate discretization transfer from changes in the propagation setting, we conduct the controlled cross-resolution comparison in Table~\ref{tab:cross_resolution}. For each of 512 independently generated channels, the same underlying continuous SFC is evaluated at all spatial--frequency resolutions. All learned estimators are trained only at $48\times48$ and applied without parameter updates or resolution-specific fine-tuning. The controls comprise direct inference by CNN, ViT, and Standard FNO; nearest, linear, and cubic interpolation from the target-resolution sparse pilots; and bilinear or bicubic upsampling of the dense $48\times48$ CNN reconstruction.
For the sparse-pilot interpolation baselines, nearest interpolation assigns each unobserved location the value of its closest pilot, whereas linear and cubic interpolation construct piecewise linear and smooth piecewise cubic surfaces, respectively, directly from the observed pilot coordinates. For the upsampling baselines, the CNN first produces a dense $48\times48$ channel estimate; its real and imaginary components are then resized to the target resolution using bilinear interpolation over adjacent output samples or bicubic interpolation over a larger local neighborhood. The former group therefore reconstructs directly from the target-resolution pilots, whereas the latter tests post-processing of a low-resolution learned channel estimate.

Across the five unseen resolutions, UFNO achieves an average NMSE of $-29.39$~dB, with only a $1.28$~dB average loss relative to its native-resolution result. 
It outperforms direct CNN, linear interpolation, cubic interpolation, and direct Standard FNO by $5.01$, $7.86$, $8.32$, and $21.67$~dB on average, respectively, and exceeds CNN bilinear and bicubic upsampling by $4.90$ and $4.93$~dB. Standard FNO attains $-29.83$~dB at $48\times48$ but degrades sharply at every unseen resolution, so its loss cannot be attributed to an inaccurate native-resolution model. Although both models employ Fourier weights that are independent of the spatial--frequency sampling size, this property permits evaluation on a new discretization but does not guarantee unchanged finite-resolution accuracy. Standard FNO propagates changes in the discrete spectrum of the sparse-pilot input through ten sequential Fourier blocks, whereas UFNO's symmetric inter-layer skip connections provide short paths for intermediate representations and its SST suppresses low-energy spectral coefficients that are more sensitive to sampling changes and noise. These mechanisms limit the accumulation of discretization mismatch and provide a structural explanation for UFNO's more stable cross-resolution performance. These controlled comparisons demonstrate the intended benefit of the proposed operator-learning framework: a single frozen UFNO preserves high-fidelity SFC reconstruction across unseen discretizations through direct inference, rather than through resolution-specific retraining or post-hoc resizing.
}

\begin{figure*}[!t]
\centering
\includegraphics[width=\textwidth]{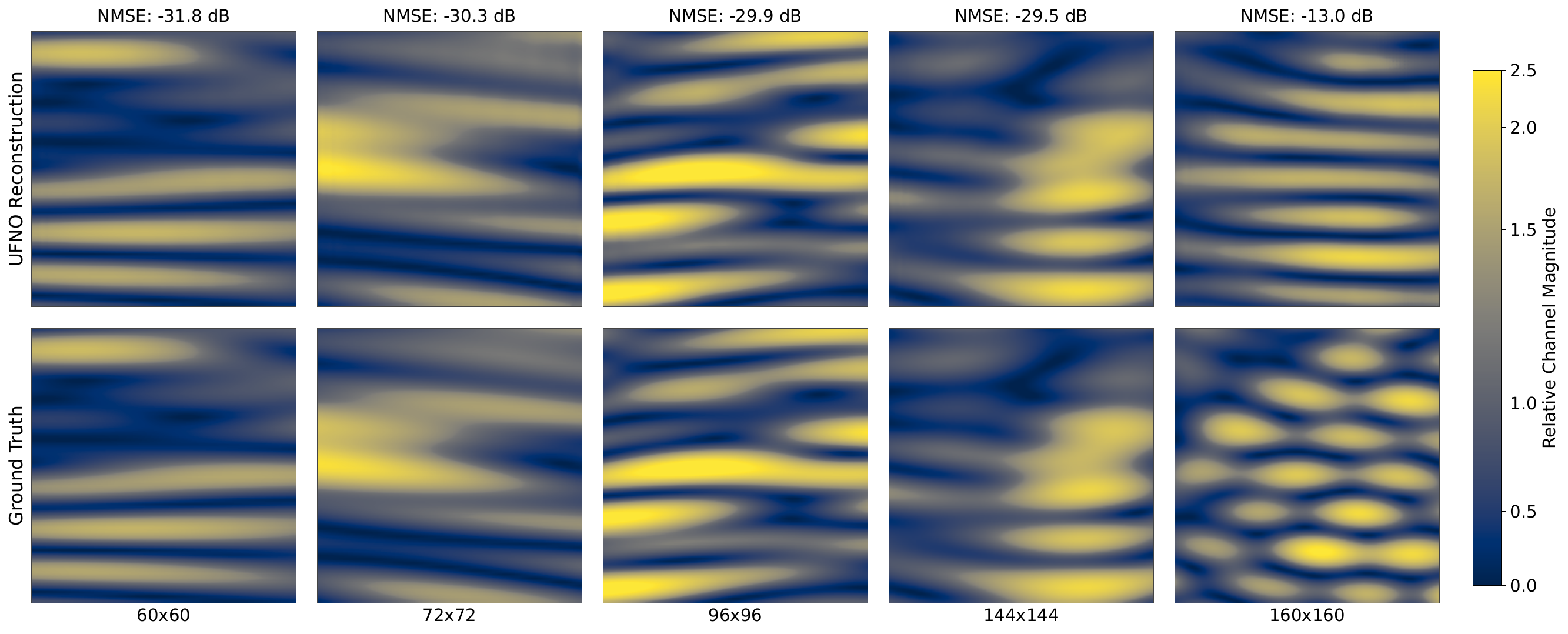}
\caption{\textcolor{black}{Qualitative cross-resolution reconstruction with a single frozen UFNO. Each column shows one representative channel realization at the indicated spatial--frequency resolution, and the annotated NMSE is realization-specific. The reconstructed fields remain smooth and preserve the dominant channel structures from $60\times60$ to $144\times144$, whereas visible mismatch emerges at $160\times160$.}}
\label{fig:multires_heatmaps}
\end{figure*}

{\color{black}
Fig.~\ref{fig:multires_heatmaps} visualizes representative one-shot UFNO reconstructions at unseen spatial--frequency resolutions. From $60\times60$ to $144\times144$, the reconstructed fields closely follow the ground truth, exhibiting smooth spatial--frequency variation and continuous multipath structures while retaining their dominant fine-scale features. This qualitative behavior is consistent with the proposed design: the global Fourier branch and learnable SST maintain coherent large-scale channel structure and suppress noise-dominated spectral components, the local branch and U-shaped skip connections preserve fine variations, and the Sobolev loss promotes derivative consistency across space and frequency.

At $160\times160$, visible high-frequency mismatch marks the limit of this transfer range. The retained Fourier-mode budget then covers a smaller relative portion of the spectrum, while the enlarged physical bandwidth under fixed subcarrier spacing further shifts the test setting away from the $48\times48$ training distribution. Thus, the visualization supplements Table~\ref{tab:cross_resolution} by showing both the smooth cross-resolution reconstruction enabled by UFNO and the onset of mismatch under a substantially larger distribution shift.
}

\subsection{\textcolor{black}{Computational Efficiency}}
{\color{black}
We evaluate the online inference overhead.
The OMP and SBL implementations execute on CPU and do not invoke GPU acceleration, whereas the four learned estimators execute on an NVIDIA GeForce RTX 4090 D GPU.
}

\begin{table*}[t]
\centering
\caption{\textcolor{black}{Online inference latency and offline training cost.}}
\label{tab:computational_overhead}
{\color{black}
\setlength{\tabcolsep}{6pt}
\textbf{(a) Online inference latency (ms)}\\[1mm]
\begin{tabular*}{0.85\linewidth}{@{\extracolsep{\fill}}llrrrrrr}
\toprule
& & \multicolumn{6}{c}{Spatial $\times$ frequency resolution} \\
\cmidrule(lr){3-8}
Method & Device & $48\!\times\!48$ & $60\!\times\!60$ & $72\!\times\!72$ & $96\!\times\!96$ & $120\!\times\!120$ & $144\!\times\!144$ \\
\midrule
UFNO & GPU & 9.57 & 9.68 & 9.68 & 9.53 & 9.58 & 9.51 \\
Standard FNO & GPU & 8.29 & 8.43 & 8.43 & 8.21 & 8.32 & 8.28 \\
CNN & GPU & 0.53 & 0.53 & 0.53 & 0.61 & 0.62 & 0.68 \\
ViT & GPU & 2.05 & 2.13 & 2.09 & 2.03 & 2.05 & 2.07 \\
\midrule
OMP & CPU & 21.64 & 22.18 & 16.35 & 47.07 & 98.95 & --- \\
SBL & CPU & 108.11 & 4542.3 & 8035.8 & 38153 & 182090 & --- \\
\bottomrule
\end{tabular*}

\vspace{2mm}
\textbf{(b) Offline training cost of the learned estimators}\\[1mm]
\setlength{\tabcolsep}{5pt}
\begin{tabular*}{0.78\linewidth}{@{\extracolsep{\fill}}lrrrr}
\toprule
Method & Best epoch & Time to best & s/epoch & Peak VRAM \\
\midrule
UFNO & 28 & $39.68$ min & $85.04$ & $5.21$ GB \\
Standard FNO & 30 & $39.33$ min & $78.67$ & $4.61$ GB \\
CNN & 25 & $5.90$ min & $14.16$ & $0.63$ GB \\
ViT & 25 & $7.18$ min & $17.27$ & $0.22$ GB \\
\bottomrule
\end{tabular*}}
\end{table*}

{\color{black}
Table~\ref{tab:computational_overhead} consolidates the online latency of all six methods and identifies the execution device of each implementation.
The four learned estimators use the same GPU-side protocol with batch size one, float32 precision, explicit CUDA synchronization, 50 warm-up iterations, and 200 timed iterations. 
UFNO maintains an almost resolution-independent latency of $9.5$--$9.7$ ms, whereas OMP and SBL increase from $21.6$ and $108.1$ ms at $48\times48$ to $99.0$ ms and $182.1$ s at $120\times120$, respectively.
Thus, UFNO is approximately $2.3\times$ and $11.3\times$ faster than OMP and SBL, respectively, at $48\times48$, and these measured advantages increase to approximately $10.3\times$ and $1.9\times10^{4}\times$ at $120\times120$.
Panel (b) reports one complete 30-epoch profiling run for each learned estimator under the same settings.
Among the learned estimators, the current UFNO implementation has higher latency than Standard FNO, ViT, and CNN, but provides substantially better reconstruction accuracy and cross-resolution generalization than CNN and ViT.
In the current PyTorch implementation, the higher latency mainly arises from the repeated IFFT operations in the Fourier blocks.
In contrast, the convolution operations in CNN and dense matrix operations in ViT benefit from highly optimized GPU implementations.
}

\section{Conclusion}
This paper establishes a paradigm shift for broadband FAS channel estimation by reformulating the ill-posed discrete recovery task as a continuous physical field reconstruction problem. \textcolor{black}{The formulation is motivated by the Green's-function integral structure of the projected Helmholtz model and reveals that spectral reconstruction is fundamentally an inverse deconvolution problem.} To address this, we propose the UFNO framework, which incorporates a learnable SST mechanism and a dual-branch architecture to circumvent inherent low-pass spectral biases. \textcolor{black}{Furthermore, a physics-aware Sobolev loss promotes consistency in both channel values and physically meaningful spatial--frequency variations.} Extensive numerical results demonstrate that the proposed operator learning framework substantially outperforms both traditional iterative solvers and state-of-the-art deep learning baselines, delivering superior reconstruction fidelity alongside \textcolor{black}{orders-of-magnitude lower inference latency than iterative solvers at high spatial--frequency resolutions}.

\textcolor{black}{Future work will validate UFNO using measured channels and investigate its integration into practical receivers.
Its cross-resolution inference capability may further facilitate deployment with different spatial and frequency sampling resolutions.
The framework may also be extended to planar or volumetric FAS configurations. Supporting time-varying channels requires a spatio-temporal operator that captures Doppler effects and temporal correlation.
The fixed Fourier-basis representation adopted in the present spatial--frequency formulation may be insufficient to characterize the coupled non-stationarity across these dimensions, and more general tools from harmonic analysis may be useful for developing suitable representations and operators.}

\bibliographystyle{IEEEtran}
\bibliography{references}
\end{document}